\documentclass{amsart}
\usepackage{amssymb}
\usepackage{mathrsfs}

\newtheorem{theorem}{Theorem}[section]
\newtheorem{proposition}[theorem]{Proposition}
\newtheorem{lemma}[theorem]{Lemma}
\newtheorem{definition}[theorem]{Definition}
\newtheorem{corollary}[theorem]{Corollary}
\newtheorem{assumption}[theorem]{Assumption}
\newtheorem{remark}[theorem]{Remark}

\numberwithin{equation}{section}

\newcommand{\Om}{\Omega}

\newcommand{\ep}{\varepsilon}
\renewcommand{\phi}{\varphi}
\newcommand{\de}{\delta}
\newcommand{\la}{\lambda}

\renewcommand{\P}{\mathbb P}
\newcommand{\Q}{\mathbb Q}
\newcommand{\E}{\mathbb E}
\newcommand{\R}{\mathbb R}
\newcommand{\N}{\mathbb N}

\newcommand{\con}{\triangleleft}
\newcommand{\qn}{(Q^n)}
\newcommand{\pn}{(P^n)}
\newcommand{\ind}{1\!\kern-1pt \mathrm{I}}
\newcommand{\rsto}{]\!\kern-1.8pt ]}
\newcommand{\lsto}{[\!\kern-1.7pt [}

\begin{document}


\title[Asymptotic arbitrage with small transaction costs]{Large financial markets and asymptotic arbitrage with small transaction costs}


\author{Irene Klein}
\address{Department of Statistics and Operations Research, University of Vienna, Br\"{u}nnerstrasse 72, 1210 Vienna, Austria.}
\email{irene.klein@univie.ac.at}

\author{Emmanuel Lepinette}
\address{Ceremade, Universit\'e Paris Dauphine, Place du Mar\'echal De Lattre De Tassigny, 75775 Paris cedex 16, France}
\email{emmanuel.lepinette@ceremade.dauphine.fr}

\author{Lavinia Ostafe}
\address{Faculty of Mathematics, University of Vienna, Nordbergstrasse 15, 1090 Vienna, Austria.}
\email{lavinia.ostafe@univie.ac.at}
\thanks{The third author gratefully acknowledges financial support from the Austrian
Science Fund (FWF) under grant P19456 and from the European Research
Council (ERC) under grant No.~247033.}

\date{\today}%

\begin{abstract}
We give  characterizations of asymptotic arbitrage of the first and second kind and
of strong asymptotic arbitrage for large financial markets with small proportional transaction costs $\la_n$ on market $n$
in terms of contiguity properties of sequences of equivalent probability measures induced by $\la_n$--consistent price systems.
 These results are analogous to the frictionless case, compare \cite{Kab:Kra:1998}, \cite{K:S:1996}. Our setting is simple, each market $n$ contains two assets with continuous price processes. The proofs use quantitative versions of the Halmos--Savage Theorem, see \cite{K:Sch:1996}, and a monotone convergence result of nonnegative local martingales. Moreover, we present an example admitting a strong asymptotic arbitrage without transaction costs; but with transaction costs $\la_n>0$ on  market $n$ ($\la_n\to0$ not too fast) there does not exist any form of asymptotic arbitrage.
\end{abstract}
\subjclass[2010]{60G44, 91B24, 91B70}
\keywords{
large financial market, asymptotic arbitrage, transaction costs, consistent price system, monotone convergence for local martingales}

\maketitle

\section{Introduction}\label{intro}

In the classical theory of mathematical finance a crucial role is played by the notion of \textit{arbitrage}, which is the cornerstone of the
option pricing theory that goes back to F.~Black, R.~Merton and M.~Scholes \cite{Bl:Sch}. In the past decades, significant work has been done to
develop this theory for models with a finite number of assets (to which we will refer to as ``small'' market models).
As the ``real world'' financial market can contain a very large (even unbounded) number of traded securities the natural idea of approximating  such a ``large'' market by a
sequence of small models came up. Even if each of the small market models is arbitrage--free, by investing in a large
enough number of them one may obtain an asymptotic form of arbitrage in the limit. So, it is clear that in a large financial market one has to exclude asymptotic arbitrage opportunities as well.

Starting from the ideas of Ross and Huberman, \cite{Ross} and \cite{Hu}, to describe a financial market by a
sequence of market models with a finite number of securities, Kabanov and Kramkov \cite{Kab:Kra:1994}
introduced the concept of a large
financial market that fits the continuous--time framework. They consider a sequence of price
processes $\{(S_t^n)_{t\in\R_+}\}_{n\in\N}$ based on a sequence of filtered probability spaces rather than a single stochastic
process $(S_t)_{t\in\R_+}$ based on a fixed filtered probability space.
Kabanov and Kramkov introduced the notion of \emph{asymptotic
arbitrage} and distinguished between two kinds: \emph{asymptotic arbitrage of the first kind} (AA1) and
\emph{asymptotic arbitrage of the second kind} (AA2). AA1 can be seen as the opportunity of getting arbitrarily rich with strictly positive
probability by taking an arbitrarily small risk. AA2 gives the opportunity of gaining at least something (which could be a small amount) with
probability arbitrarily close to one, while taking the risk of losing an amount of money which is uniformly bounded.

Extensive work has been done in the case of a small market model without transaction costs to relate the absence of arbitrage opportunities to the existence
of equivalent martingale measures, see e.g. \cite{H:K}, \cite{H:P}, \cite{Kr}, \cite{D:M:W}, \cite{D:S:94}, \cite{D:S:98}, \cite{K:K} and \cite{K:S}.
A similar situation occurs also in the context of a large financial market. Under the assumption that each small market model does not allow for any kind of arbitrage opportunity one gets a connection
between the absence of asymptotic arbitrage and some properties of the sequences of equivalent martingale measures $Q^n$ for $S^n$. The first results in this direction were obtained by
Kabanov and Kramkov under the assumption that each small market model is complete. In \cite{Kab:Kra:1994}
they established necessary and sufficient conditions for the absence of asymptotic arbitrage of the first and of the second kind in terms of contiguity properties of the sequence of equivalent martingale measures with respect to the sequence of objective measures. As contiguity of sequences of measures is the generalization of absolute continuity of measures these theorems can be considered as versions of the Fundamental Theorem of Asset Pricing for large financial markets. In \cite{K:S:1996} and \cite{K:Sch:1996}, Klein and Schachermayer and
in  \cite{Kab:Kra:1998} Kabanov and Kramkov extended the theorems of  \cite{Kab:Kra:1994} to the incomplete market case.

All the above--mentioned results were obtained for a large financial market without transaction costs. The main
goal of this paper is to study the asymptotic arbitrage opportunities when
each small market model of index $n$ is subject to transaction costs $\la_n$. When one introduces transaction
costs, the usual notion of an equivalent martingale measure that is used in a market without friction, is replaced
by the concept of a $\la$--consistent price system ($\la$--CPS), see e.g.  \cite{C:S}, \cite{G:R:S:2008},  \cite{Ge:MK:Sch}, \cite{K:Saf}. In analogy with the classical case of a small market without
transaction costs, it is here possible to relate the absence of arbitrage opportunities to the existence of
$\la$--CPSes. In this paper  we will give  similar criteria for the case of  a large financial market. To be precise we give characterizations of the absence of the various asymptotic arbitrage
opportunities in terms of contiguity properties of the sequence of sets of measures $Q^n$ induced by $\la_n$--CPSes $(Q^n,\widetilde{S}^n)$
with respect to the sequence of objective measures $(P^n)$.

We will now briefly describe the structure of the present paper in detail.
We concentrate on a simple setting for each small market model with proportional transaction costs $\la_n$. That is, we assume that
each market $n$ is given by two assets, a
risk--free and a risky one with continuous paths. We impose transaction costs $\la_n$ in the following way: if we buy one unit of the risky asset at time $t$ we have to pay $S^n_t$, if we sell one unit at time $t$ we receive $(1-\la_n)S^n_t$.
In Section~2 we introduce the market model. In Section~3 we give the definitions of asymptotic arbitrage in the presence of small transaction costs $\la_n$ on market $n$. Moreover, we present our main results, which turn out to be completely analogous  to the frictionless case, compare \cite{K:S:1996}, \cite{Kab:Kra:1998}. Indeed, the first theorem gives  that the absence of asymptotic arbitrage of the first kind with transaction costs $\la_n$ on market $n$ is equivalent to the existence of a sequence of $\la_n$--CPS $(Q^n,\widetilde{S}^n)$ such that the sequence of objective measures $(P^n)$ is contiguous with respect to the sequence $(Q^n)$. The second theorem gives that the absence of asymptotic arbitrage of second kind with transaction costs $\la_n$ on market $n$ is equivalent to weak contiguity of the sequence of sets of measures $Q^n$ coming from a $\la_n$--CPS $(Q^n,\widetilde{S}^n)$   with respect to the sequence of objective measures $P^n$. Moreover, in a third theorem we characterize the notion of
strong asymptotic arbitrage in terms of a condition of entire separability of the sequence of the sets of measures $Q^n$ and the sequence of objective measures $(P^n)$.
In Section~4 we give the proofs of the main results. One tool for these proofs is a monotone convergence result for a sequence of nonnegative local martingales which we present in an extra section, Section~5, as it seems interesting in itself.
In Section~6 we present an example of a large financial market that shows that the introduction of transaction costs influences the asymptotic arbitrage opportunities. We present a large financial market that allows for a strong asymptotic arbitrage. However, if we impose transaction costs (however small) the asymptotic arbitrage disappears. We  can even show that, if we let the transaction costs $\la_n$ on market $n$ converge to 0 not too fast, then there still does not exist any form of asymptotic arbitrage. We show this by directly constructing a $\la_n$--CPS with a unique equivalent martingale measure $Q^n$ such that the sequence $(Q^n)$ is contiguous with respect to $(P^n)$ and vice versa.

At the end of this introduction  a  comparison of our results with the similar recent results of Lepinette and Ostafe \cite{Le:Lav} is in order. Let us first mention that the setting of two assets is more natural  if one is interested in the behavior when the transaction costs $\la_n$ converge to 0 for $n\to\infty$. Section~\ref{ex} is one step in this direction.
The more general high dimensional setting as in  \cite{Le:Lav} is based on the  framework of Kabanov's modelling of multi--dimensional currency markets in a num\'eraire--free way, see e.~g.~\cite{K:L}, \cite{K:S}, \cite{K:R:S}.
The authors of \cite{Le:Lav}  can extend the results of \cite{Kab:Kra:1998}, \cite{K:S:1996} to multidimensional large financial markets with transaction costs.
We still think that it is interesting to prove the results in the two-dimensional case directly. On one hand our proofs are based on rather direct applications of the quantitative versions of the Halmos--Savage--Theorem of \cite{K:Sch:1996} and we do not have to go into the involved details of the general cone--setting. On the other hand, although
our setting clearly is a special case of the one provided in \cite{Le:Lav}, there is a substantial difference in the way the concepts of asymptotic arbitrage are defined and therefore the theorems are similar but not equivalent.
Usually
in mathematical finance, in an arbitrage condition, one considers trading strategies with value processes that are bounded from below by the same constant at all
times $0\leq t\leq T$. In this way phenomena like doubling strategies are excluded. Note that, in the asymptotic arbitrage definitions of the
present paper we require the value processes to be bounded from below by the same constant for all $n$ and all times $0\leq t\leq T_n$, whereas this is not the case in the asymptotic arbitrage definitions of \cite{Le:Lav}.
They only assume the same bound from below for all $n$ at the terminal times $T_n$. At times $t<T_n$ the bound from below could depend on $n$. This means, in their versions of asymptotic arbitrage, when $n$ goes to infinity, one could make arbitrarily
big losses at times $t<T_n$ before reaching the asymptotic arbitrage at terminal time $T_n$, which could be compared to the phenomenon of doubling strategies. Moreover, this is not in complete analogy with the original definitions of asymptotic arbitrage opportunities as appearing in \cite{K:S:1996}, \cite{Kab:Kra:1998}, where the uniform bound from below for all $n$ and $t$ was crucial.
Therefore, we consider
our present definitions with the same lower bound for all $n$ at all time points $t$ as the more natural
versions of asymptotic arbitrage.
In the no transaction costs case, however, this difference does not even occur as  having the bound from below at maturity
$T$ implies the same bound from below for all times $t<T$ (as we assume NA for each market $n$), see \cite{D:S:94}. In the transaction
costs setting, the situation is more subtle. A not yet published example of W.~Schachermayer \cite{Sch} shows that,  even if
one excludes arbitrage with transaction costs $\la$, it is possible to construct an example in which the value
process of an admissible trading strategy at the terminal time is bounded below by $-1$ but it is strictly smaller than $-1$ with positive
probability at a prior time point. A result of
W.~Schachermayer \cite{Sch}, which we state in the Appendix, see Proposition~\ref{schach}, provides us with the appropriate tool for our proofs.
Under the assumption that we have NA {\it for all} $\la>0$, this result gives us the same
bound from below at all times $t\leq T$ if the value process is bounded from below at the terminal time $T$ (this is
 then in analogy to the no transaction costs case). In order to apply this we  impose that
on each market $n$ there does not exist an arbitrage opportunity with any transaction costs $\la>0$ (however small).
This is in
the spirit of large financial markets where there is always assumed that each small market is arbitrage-free in
every possible sense. As our no
asymptotic arbitrage conditions are weaker than those in \cite{Le:Lav}, our theorems are stronger in the following aspect: the more involved implications,
which are that no asymptotic arbitrage implies the corresponding contiguity property, are  stronger.

\section{Large financial markets with proportional transaction costs}\label{model}

A large financial market consists of a sequence of  market models, which are given in the following way. Let
$(\Omega^n,\mathcal{F}^n, (\mathcal{F}^n_t)_{0\leq t\leq T_n}, P^n)$, $n\in\N$, be a sequence of filtered probability spaces where the filtration
satisfies the usual assumptions. For each $n$ we are given a risk--free asset $B^n$ normalized to $B^n_t=1$ and a risky asset $S^n$, where $(S^n_t)_{0\leq t\leq T_n}$ is adapted  to $(\mathcal{F}^n_t)_{0\leq t\leq T_n}$ with continuous and strictly positive paths. Moreover, in order to be able to apply the crucial Proposition~\ref{schach} which appears in an unpublished work of W.~Schachermayer,
we use the same assumptions: that is, we assume that the filtration $(\mathcal{F}^n_t)_{0\leq t\leq T_n}$ is generated by a $d(n)$--dimensional Brownian motion $(W^n_t)_{0\leq t\leq T_n}$. (All local martingales with respect to an equivalent measure will have continuous paths).

We assume that on each market $n$ we have to pay proportional transaction costs $\la_n$. As usual in the context of large financial markets we will assume
an appropriate condition on each market $n$ that will exclude any form of arbitrage there. In our case we want to guarantee that
on each market $n$ there are no arbitrage opportunities with small transaction costs.
To this end we recall the notion of a consistent pricing system as used, for example, in \cite{Ge:MK:Sch}, which replaces the notion of an equivalent martingale measure that is used in a market without transaction costs.

\begin{definition}\label{CPS}
A $\la$--consistent price system  (CPS) for the market $n$ is a pair $({Q}^n,\widetilde{S}^n)$  of a probability measure
${Q}^n\sim P^n$ and a process $(\widetilde{S}^n_t)_{0\leq t\leq T^n}$ which is a local martingale under ${Q}^n$ such that
\begin{equation}
 (1-\la)S^n_t\leq \widetilde{S}^n_t\leq S^n_t,\ \text{a.s., for all $t\in[0,T^n]$}.
\end{equation}
Denote by $\mathcal{M}^n(\la)$ the set of all probability measures $Q\sim P^n$ inducing a $\la$--CPS, that is
\begin{equation}\label{CPSset}
\mathcal{M}^n(\la)=\{Q\sim P^n:\text{$\exists$  a $Q$--local martingale $\widetilde{S}^n$ with $(Q,\widetilde{S}^n)$ $\la$--CPS}\}
\end{equation}
\end{definition}

The following assumption will be used throughout the paper. For each market $n$ this condition is equivalent to
the absence of arbitrage with arbitrarily small transaction costs $\la>0$ on each market $n$, see \cite{G:R:S:2008}
(we state the theorem in the appendix, Theorem~\ref{NA-L}).

\begin{assumption}\label{NA}
$\mathcal{M}^n(\la)\neq\emptyset$ for all $\la>0$ and all $n\in\N$.
\end{assumption}

In order  to give a meaning to a no arbitrage condition in the presence of transaction costs $\la$ we define the
following trading strategies. The definition goes back to \cite{Kal:MK:2010}, \cite{Kal:MK:2011} and
\cite{Ge:MK:Sch}. Here everything works in the same way for each market $n$, therefore we
omit the superscript $n$ for the moment.

\begin{definition}
A self--financing trading strategy with zero endowment is a pair of right--continuous, adapted finite--variation processes
$(\phi^0_t,\phi^1_t)_{0\leq t\leq T}$ such that
\begin{enumerate}
\item $\phi_{0-}^0=\phi_{0-}^1=0$
\item $\phi_t^0=\phi_t^{0,\uparrow}- \phi_t^{0,\downarrow}$ and $\phi_t^1=\phi_t^{1,\uparrow}- \phi_t^{1,\downarrow}$, where
$\phi_t^{0,\uparrow}$, $\phi_t^{0,\downarrow}$, $\phi_t^{1,\uparrow}$, $\phi_t^{1,\downarrow}$ are the decompositions of $\phi^0$ and $\phi^1$ into increasing processes starting at $\phi_{0-}^{0,\uparrow}=\phi_{0-}^{0,\downarrow}=\phi_{0-}^{1,\uparrow}=\phi_{0-}^{1,\downarrow}=0$ and satisfying
\begin{equation}\label{selff}
d\phi_t^{0,\uparrow}\leq (1-\la)S_td\phi_t^{1,\downarrow},\quad d\phi_t^{0,\downarrow}\geq S_td\phi_t^{1,\uparrow}, \quad 0\leq t\leq T.
\end{equation}
\item The trading strategy $\phi:=(\phi^0,\phi^1)$ is called admissible if there is $M>0$ such that  the value process (under transaction costs $\la$) $V^{\la,S}_t(\phi)$ satisfies
\begin{equation}\label{valuepr}
V_t^{\la,S}(\phi):=\phi^0_t+(\phi_t^1)^+(1-\la)S_t-(\phi_t^1)^-S_t\geq -M,\text{
a.s. for $0\leq t\leq T$}.
\end{equation}
\end{enumerate}
\end{definition}
Here $\phi^0$ specifies the holdings in the bond and $\phi^1$ the holdings in the stock.
In the above definition, the initial value of the trading strategy is given at the point $0-$. We extend the time
interval $[0,T]$ by the additional point $0-$ in order to be able to cope with a possible jump which may appear
at time 0. We impose the conditions in
\eqref{selff} in order to be solvent, i.e., after liquidating the position in stock, the
resulting amount in bond is non--negative. The positive and negative part of $\phi_t^1$ in \eqref{valuepr} mean
that we liquidate the position in stock, which is selling $\phi_t^1$ units of stock for the price $(1-\la)S_t$
if $\phi_t^1$ is positive, and  buying $-\phi_t^1$ units of stock for the price $S_t$ if $\phi_t^1$ is negative.

Now we present the natural definition of an arbitrage with transaction costs $\la$.

\begin{definition}
$S$ admits arbitrage with transaction costs $\la$ if there is an admissible self-financing trading strategy $\phi=(\phi^0,\phi^1)$ with zero endowment such that $V^{\la,S}_T(\phi)\geq0$ and $P(V^{\la,S}_T(\phi)>0)>0$.
If $S$ does not admit arbitrage with transaction costs $\la$ then we say that the market satisfies the property NA($\la$).
\end{definition}

As stated in the introduction, by Theorem~\ref{NA-L} in the appendix, which goes back to \cite{G:R:S:2008}, Assumption~\ref{NA} means that, on each market $n$, NA($\la$) holds for all $\la>0$.
\section{Asymptotic arbitrage in the presence of small transaction costs}\label{results}

In this chapter we will derive the connection between the absence of asymptotic arbitrage with transaction costs
$\la_n$ and some properties of sequences of $\la_n$--CPS. It turns out that the results are analogous to the
results in the case of no transaction costs, see  \cite{Kab:Kra:1998} and \cite{K:S:1996}. Fix any sequence $(\la_n)$ of  real numbers $0<\la_n<1$.  Let
us now define the notions of asymptotic arbitrage of the first and the second kind and of strong asymptotic arbitrage for
the markets with transaction costs.

\begin{definition}\label{AA1}
 There exists an asymptotic arbitrage of the first kind  with transaction
costs $\la_n$ (AA1($\la_n$)) if there exists a subsequence of markets (again denoted by $n$) and admissible
trading strategies $\phi^n=(\phi^{0,n}, \phi^{1,n})$ with zero endowment for $S^n$ such that
\begin{enumerate}
\item $V^{\la_n,S^n}_t(\phi^n)\geq-c_n$, for all $0\leq t\leq T^n$,
\item $\lim_{n\to\infty}P^n(V_{T^n}^{\la_n,S^n}(\phi^n)\geq C_n)>0$
\end{enumerate}
where $c_n$ and $C_n$ are sequences of positive real numbers with $c_n\to0$ and $C_n\to\infty$.
\end{definition}

\begin{definition}\label{AA2}
 There exists an asymptotic arbitrage of the second kind  with transaction
costs $\la_n$ (AA2($\la_n$))
if there exists a subsequence of markets (again denoted by $n$) and admissible trading strategies $\phi^n=(\phi^{0,n}, \phi^{1,n})$ with zero endowment for $S^n$ and $\alpha>0$ such that
\begin{enumerate}
\item $V^{\la_n,S^n}_t(\phi^n)\geq-1$, for all $0\leq t\leq T^n$,
\item $\lim_{n\to\infty}P^n(V^{\la_n,S^n}_{T^n}(\phi^n)\geq \alpha)=1$.
\end{enumerate}
\end{definition}

\begin{definition}\label{SAA}
 There exists a strong asymptotic arbitrage  with transaction
costs $\la_n$  if there exists a subsequence of markets (again denoted by $n$) and admissible
trading strategies $\phi^n=(\phi^{0,n}, \phi^{1,n})$ with zero endowment for $S^n$ such that
\begin{enumerate}
\item $V^{\la_n,S^n}_t(\phi^n)\geq-c_n$, for all $0\leq t\leq T^n$,
\item $\lim_{n\to\infty}P^n_{T^n}(V^{\la_n,S^n}(\phi^n)\geq C_n)=1$
\end{enumerate}
where $c_n$ and $C_n$ are sequences of positive real numbers with $c_n\to0$ and $C_n\to\infty$.
\end{definition}

As the notion of equivalent probability measure as it appears in the setting of one fixed market has to be
replaced by the notion of contiguity of sequences of  probability  measures, we recall the definition of
contiguity.

\begin{definition}
The sequence of probability measures $\pn$ is contiguous with respect to the sequence of probability measures
$\qn$, $\pn\con\qn$, if the following holds:
for  $A^n\in\mathcal{F}^n$ with $Q^n(A^n)\to0$, for $n\to\infty$, we have that $P^n(A^n)\to0$, for $n\to\infty$.
\end{definition}

Let us stress that, in all our definitions of the various forms of asymptotic arbitrage, we assume that the value processes are bounded from
below by the same constant, for all $t$ and all $n$. This is a crucial difference to the definitions of \cite{Le:Lav} as we already explained in some detail in the introduction.

Now we can formulate the three main results of this paper. The proofs are based on an easy characterization of
the various asymptotic arbitrage properties and follow immediately by quantitative Halmos--Savage results that go
back to \cite{K:Sch:1996} and by a monotone convergence result for local martingales which we prove in Section~\ref{locmart}.

\begin{theorem}\label{1}
Under Assumption~\ref{NA} there is no asymptotic arbitrage of the first kind for transaction costs $\la_n$ if and only if
there exists a sequence $({Q}^n)$, with ${Q}^n\in \mathcal{M}^n(\la_n)$ for each $n$, such that $\pn\con({Q}^n)$.
\end{theorem}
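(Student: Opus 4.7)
I would prove the two implications of Theorem~\ref{1} separately, paralleling the frictionless proof of \cite{K:S:1996} with the shadow price process $\widetilde{S}^n$ replacing $S^n$. The main ingredients are an integration--by--parts identity that produces a supermartingale from any admissible strategy, the quantitative Halmos--Savage theorem of \cite{K:Sch:1996} applied to $\mathcal{M}^n(\la_n)$, and the monotone convergence result for nonnegative local martingales of Section~\ref{locmart}.

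\emph{Sufficiency.} Given $(Q^n,\widetilde{S}^n)$ a $\la_n$--CPS with $\pn\con\qn$, I introduce the shadow value $\widetilde{V}^n_t:=\phi^{0,n}_t+\phi^{1,n}_t\widetilde{S}^n_t$ of an admissible strategy $\phi^n$ with $V^{\la_n,S^n}_t(\phi^n)\ge-c_n$. Integration by parts, using continuity of $\widetilde{S}^n$, finite variation of $\phi^n$, the self--financing inequalities \eqref{selff}, and the sandwich $(1-\la_n)S^n\le\widetilde{S}^n\le S^n$, gives $d\widetilde{V}^n_t=\phi^{1,n}_{t-}\,d\widetilde{S}^n_t+dA^n_t$ with $A^n$ a decreasing finite--variation process. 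Hence $\widetilde{V}^n$ is a $Q^n$--local supermartingale; the bound $\widetilde{V}^n\ge V^{\la_n,S^n}(\phi^n)\ge-c_n$ promotes it to a true supermartingale by Fatou, so $\E_{Q^n}[\widetilde{V}^n_{T_n}+c_n]\le c_n$. Markov's inequality applied to the nonnegative $\widetilde{V}^n_{T_n}+c_n$ yields $Q^n(V^{\la_n,S^n}_{T_n}(\phi^n)\ge C_n)\le c_n/(C_n+c_n)\to 0$, and $\pn\con\qn$ transfers this to $P^n$, contradicting Definition~\ref{AA1}.

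\emph{Necessity.} Assume no AA1$(\la_n)$. Positive rescaling $\phi^n\mapsto\phi^n/c_n$ respects \eqref{selff} and scales the value process, so the assumption is equivalent to uniform boundedness in $P^n$--probability, across $n$, of $\mathcal{A}^n:=\{V^{\la_n,S^n}_{T_n}(\psi^n):\psi^n\text{ admissible with }V^{\la_n,S^n}(\psi^n)\ge-1\}$; Proposition~\ref{schach} is crucial here so that the terminal bound propagates backwards to all $t\le T^n$ under Assumption~\ref{NA}, making ``adm with $V\ge-1$'' a robust class. I would then apply the quantitative Halmos--Savage theorem to the convex set $\mathcal{M}^n(\la_n)$ to produce, for every $\ep>0$, a $\de>0$ and some $Q^n\in\mathcal{M}^n(\la_n)$ with $Q^n(A)\ge\de$ whenever $P^n(A)\ge\ep$, uniformly in $n$; a diagonal selection along $\ep_k\downarrow 0$ then delivers a single sequence with $\pn\con\qn$. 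If this conclusion fails one obtains, along a subsequence, sets $A^n$ with $P^n(A^n)\ge\ep$ but $\sup_{Q\in\mathcal{M}^n(\la_n)}Q(A^n)\to 0$; combining this with the supermartingale bound $\E_Q[V^{\la_n,S^n}_{T_n}(\psi^n)]\le 0$ (valid for all $Q\in\mathcal{M}^n(\la_n)$ and all such $\psi^n$, as in the sufficiency step) through a bipolar/superreplication argument yields admissible $\psi^n$ whose terminal values are of order $(\sup_Q Q(A^n))^{-1}$ on $A^n$, witnessing AA1 and producing the contradiction.

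The hard part will be verifying that $\mathcal{M}^n(\la_n)$ is closed under countable convex combinations, a property implicitly required both by the Halmos--Savage extraction and by the bipolar argument. Given $(Q^n_k,\widetilde{S}^{n,k})\in\mathcal{M}^n(\la_n)$ with weights $\al_k\ge 0$, $\sum_k\al_k=1$, one must exhibit a single process $\widetilde{S}^n$ that is a local martingale under $Q^n:=\sum_k\al_k Q^n_k$ and still satisfies $(1-\la_n)S^n\le\widetilde{S}^n\le S^n$. The natural density--weighted candidate is bounded between $(1-\la_n)S^n$ and $S^n$, but verifying the local martingale property under the combined measure requires a limiting procedure; this is precisely the role of the monotone convergence theorem for nonnegative local martingales from Section~\ref{locmart}. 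With this closure property in hand, the Halmos--Savage argument closes the necessity direction.
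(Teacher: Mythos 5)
Your proposal is correct and follows essentially the same route as the paper: the supermartingale property of the shadow value process under a consistent price system for sufficiency, and for necessity the superreplication theorem combined with Proposition~\ref{schach} to build an AA1 from a failure of the $\ep$--$\de$ condition, followed by the quantitative Halmos--Savage theorem and the closure of $\mathcal{M}^n(\la_n)$ under countable convex combinations proved via the monotone convergence result for nonnegative local martingales. The only quibble is a normalization detail, not a gap: the superreplicated payoffs should gain of order $\bigl(\sup_{Q}Q(A^n)\bigr)^{-1/2}$ (not $^{-1}$) on $A^n$, so that the corresponding admissibility bound from below can be taken tending to zero as Definition~\ref{AA1} requires.
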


\begin{theorem}\label{2}
Under Assumption~\ref{NA} there is no asymptotic arbitrage of the second kind for transaction costs $\la_n$ if and only if for each $\ep>0$ there is $\delta>0$ and a sequence of measures $({Q}^n)$, with ${Q}^n\in \mathcal{M}^n(\la_n)$ for each $n$,
such that for each $A^n\in\mathcal{F}^n$ with $P^n(A^n)<\delta$ we have that ${Q}^n(A^n)<\ep$, for all $n$.
\end{theorem}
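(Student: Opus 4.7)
The plan is to mirror the frictionless proof of Klein--Schachermayer \cite{K:S:1996}, with Proposition~\ref{schach} of the appendix taking over the role of propagating lower bounds through time (which in the frictionless case is automatic) and the quantitative Halmos--Savage theorem of \cite{K:Sch:1996} providing the dual link between trading strategies and the set $\mathcal{M}^n(\la_n)$.

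For the ``if'' direction, fix an admissible sequence $\phi^n$ with $V^{\la_n,S^n}_t(\phi^n)\geq-1$ on $[0,T^n]$ and any $\alpha>0$. Using $(1-\la_n)S^n\leq\widetilde{S}^n\leq S^n$, the auxiliary process $\widetilde{V}^n_t:=\phi^{0,n}_t+\phi^{1,n}_t\widetilde{S}^n_t$ dominates $V^{\la_n,S^n}_t(\phi^n)$ pointwise, and a direct computation with \eqref{selff} gives $d\widetilde{V}^n\leq\phi^{1,n}\,d\widetilde{S}^n$. Thus $\widetilde{V}^n$ is bounded above by a $Q^n$-local martingale starting at $0$, which is a $Q^n$-supermartingale since it is bounded below by $-1$. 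Therefore $E_{Q^n}[V^{\la_n,S^n}_{T^n}(\phi^n)]\leq 0$, and Markov's inequality applied to $V^{\la_n,S^n}_{T^n}(\phi^n)+1\geq0$ yields
\[
Q^n\bigl(V^{\la_n,S^n}_{T^n}(\phi^n)<\alpha\bigr)\geq\frac{\alpha}{1+\alpha}.
\]
If AA2 held, then $P^n(V^{\la_n,S^n}_{T^n}(\phi^n)<\alpha)\to0$; choosing $\ep<\alpha/(1+\alpha)$ and the associated $\delta$ and $(Q^n)$ from the hypothesis would give $Q^n(V^{\la_n,S^n}_{T^n}(\phi^n)<\alpha)<\ep<\alpha/(1+\alpha)$ eventually, a contradiction.

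For the converse I argue by contrapositive. Assume the $(\ep,\delta)$-condition fails: there is $\ep_0>0$ such that for every $\delta>0$ and every sequence $Q^n\in\mathcal{M}^n(\la_n)$, some $n$ and $A^n$ satisfy $P^n(A^n)<\delta$ and $Q^n(A^n)\geq\ep_0$. I first verify that $\mathcal{M}^n(\la_n)$ is closed under countable convex combinations: given $(Q_i,\widetilde{S}_i)$ and weights $\alpha_i\geq0$ with $\sum\alpha_i=1$, let $Q=\sum\alpha_iQ_i$, let $Z^i_t$ be the $Q$-density process of $Q_i$, and set $\widetilde{S}_t:=\sum\alpha_iZ^i_t\widetilde{S}_i^t$, which is a convex combination with weights $\alpha_iZ^i_t$ summing to $1$ and therefore lies in $[(1-\la_n)S^n_t,S^n_t]$; the local martingale property of $\widetilde{S}$ for infinite sums follows from the monotone convergence result of Section~\ref{locmart}. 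The quantitative Halmos--Savage theorem of \cite{K:Sch:1996} then upgrades the failure of the condition to: after passing to a subsequence, a single sequence $A^n$ with $P^n(A^n)\to0$ and $\inf_{Q\in\mathcal{M}^n(\la_n)}Q(A^n)\geq\ep_0/2$. For $\alpha<\ep_0/(2-\ep_0)$ this gives
\[
\sup_{Q\in\mathcal{M}^n(\la_n)}E_Q\bigl[\alpha\ind_{(A^n)^c}-\ind_{A^n}\bigr]<0,
\]
so super-replication duality in the transaction-cost setting produces an admissible $\phi^n$ of zero endowment with $V^{\la_n,S^n}_t(\phi^n)\geq-1$ throughout $[0,T^n]$ (Proposition~\ref{schach} supplies the pathwise lower bound from the terminal one) and $V^{\la_n,S^n}_{T^n}(\phi^n)\geq\alpha\ind_{(A^n)^c}-\ind_{A^n}$; since $P^n((A^n)^c)\to1$, this is AA2.

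The main obstacles are concentrated in the converse. First is the closure lemma for $\mathcal{M}^n(\la_n)$ under countable convex combinations, which is what allows the quantitative Halmos--Savage theorem to be invoked; the monotone convergence result for nonnegative local martingales in Section~\ref{locmart} is exactly what is needed to pass from finite to countably many summands without leaving the CPS cone. Second is the super-replication step with the strict pathwise admissibility bound $V_t\geq-1$ for every $t$, not merely for $t=T^n$; here Proposition~\ref{schach}, available under Assumption~\ref{NA} (NA for every $\la>0$), is indispensable and is precisely what keeps the proof structurally analogous to the frictionless case in \cite{K:S:1996}.
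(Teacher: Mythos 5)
Your proposal is correct and follows essentially the same route as the paper: the paper establishes the pointwise $\ep$--$\delta$ characterization of NAA2($\la_n$) (Lemma~\ref{L}(2)) via the superreplication theorem, Proposition~\ref{schach} for the pathwise lower bound, and the supermartingale property of $V^{0,\widetilde S^n}(\phi^n)$ under a CPS (Lemmas~\ref{blum} and~\ref{grr}), and then upgrades ``for each $A^n$ some $Q$'' to a single sequence $(Q^n)$ by Proposition~\ref{HS2}, exactly as you do (you merely package the forward direction with Markov's inequality and the converse as a contrapositive). One small inaccuracy: Proposition~\ref{HS2} only requires $\mathcal{M}^n(\la_n)$ to be convex, so the countable convex-combination closure (Corollary~\ref{Conv}) is not needed here --- the paper uses it only for Theorem~\ref{1}, to mix the measures $\widetilde Q^{n,\ep_m}$ over $m$.
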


The property of the sequence of sets $\left(\mathcal{M}^n(\la_n)\right)$ in the above theorem appeared already in
\cite{K:S:1996} in an analogous result on NAA2 without transaction costs and in \cite{Kab:Kra:1998} where it was named {\it weak contiguity}.
It is clear that with transaction costs the condition cannot look nicer as the zero transaction costs case is
included.

\begin{theorem}\label{strong}
Under Assumption~\ref{NA} there exists a strong asymptotic arbitrage for transaction costs $\la^n$ if and only if there exists a subsequence $n_k$ and sequence of sets $A^k\in\mathcal{F}^{n_k}$ such that $P^{n_k}(A^{k})\to1$
and $\lim_{k\to\infty}\sup_{Q^{n_k}\in \mathcal{M}(\la_{n_k})}Q^{n_k}(A^k)=0$.
 This means that  $\pn$ is entirely asymptotically separable from the sequence of the upper envelope of the sets $\mathcal{M}(\la_{n_k})$.
\end{theorem}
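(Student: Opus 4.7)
The plan is to mirror the Kabanov--Kramkov characterisation of strong asymptotic arbitrage in the frictionless case \cite{Kab:Kra:1998}, replacing equivalent martingale measures by the set $\mathcal{M}^n(\la_n)$ of measures inducing a $\la_n$--CPS. The ``$\Rightarrow$'' direction will be an application of the supermartingale property of the ``paper value'' of an admissible strategy evaluated under any CPS, while the ``$\Leftarrow$'' direction will combine a super--hedging duality for markets with small proportional transaction costs with Proposition~\ref{schach}.

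For the implication from SAA to separability, pass to the SAA subsequence (still indexed by $n$) and set $A^n := \{V^{\la_n,S^n}_{T^n}(\phi^n) \geq C_n\}$, so that $P^n(A^n) \to 1$. Fix any $Q \in \mathcal{M}^n(\la_n)$ with associated CPS $\widetilde{S}^n$ and introduce the paper value $U_t := \phi^{0,n}_t + \phi^{1,n}_t \widetilde{S}^n_t$. The inequalities $(1-\la_n)S^n \leq \widetilde{S}^n \leq S^n$ give $U_t \geq V^{\la_n,S^n}_t(\phi^n) \geq -c_n$, and integrating by parts (using continuity of $\widetilde{S}^n$ and finite variation of $\phi^{1,n}$) yields $dU_t = \phi^{1,n}_t\, d\widetilde{S}^n_t + d\phi^{0,n}_t + \widetilde{S}^n_t\, d\phi^{1,n}_t$. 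The self--financing conditions \eqref{selff} combined with $(1-\la_n)S^n \leq \widetilde{S}^n \leq S^n$ force the drift $d\phi^{0,n}_t + \widetilde{S}^n_t\, d\phi^{1,n}_t$ to be non--positive. Hence $U$ is a $Q$--local supermartingale bounded from below, and thus a true supermartingale, so $E_Q[U_{T^n}] \leq U_{0-} = 0$. Since $U_{T^n} \geq V^{\la_n,S^n}_{T^n}(\phi^n) \geq C_n \ind_{A^n} - c_n$, this gives $Q(A^n) \leq c_n/C_n$, and taking the supremum over $Q$ yields $\sup_Q Q(A^n) \leq c_n/C_n \to 0$.

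For the converse, set $\delta_k := \sup_{Q \in \mathcal{M}^{n_k}(\la_{n_k})} Q(A^k)$, which tends to $0$ by hypothesis, and choose $C_k \to \infty$ and $c_k \to 0$ with $c_k > C_k \delta_k$ (for instance $C_k = \delta_k^{-1/2}$, $c_k = 2\delta_k^{1/2}$, with the obvious adjustment if $\delta_k = 0$). The super--hedging duality for markets with small proportional transaction costs and continuous price processes (\cite{C:S,Ge:MK:Sch,K:Saf}) then produces an admissible zero--endowment strategy $\phi^k$ with $V^{\la_{n_k},S^{n_k}}_{T^{n_k}}(\phi^k) \geq C_k\ind_{A^k} - c_k \geq -c_k$. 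By Assumption~\ref{NA}, Proposition~\ref{schach} applies and upgrades this terminal lower bound to $V^{\la_{n_k},S^{n_k}}_t(\phi^k) \geq -c_k$ for all $t \in [0,T^{n_k}]$. On $A^k$ the terminal liquidation value is at least $C_k - c_k \to \infty$, while $P^{n_k}(A^k) \to 1$, so $(\phi^k)$ realises a strong asymptotic arbitrage.

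The main obstacle is the converse direction. One must invoke a super--hedging theorem sharp enough to produce, for each $k$, a \emph{genuine} admissible zero--endowment strategy realising the inequality $V^{\la_{n_k},S^{n_k}}_{T^{n_k}}(\phi^k) \geq C_k\ind_{A^k} - c_k$; and one must then use Proposition~\ref{schach} to convert this terminal lower bound into a \emph{uniform--in--$t$} bound with the vanishing constant $c_k$. This second step is precisely where the strength of Assumption~\ref{NA} (NA$(\la)$ for \emph{every} $\la > 0$, not merely for $\la = \la_n$) becomes essential: without it, the value process of $\phi^k$ could drop below $-c_k$ at some intermediate time, breaking the uniform lower bound that Definition~\ref{SAA} demands.
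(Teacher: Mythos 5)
Your proposal is correct and follows essentially the same route as the paper: the forward direction rests on the supermartingale property of the paper value $\phi^0+\phi^1\widetilde S$ under any consistent price system (the paper's Lemma~\ref{blum} and Lemma~\ref{grr}), and the converse superreplicates exactly the payoff $\delta_k^{-1/2}\ind_{A^k}-2\sqrt{\delta_k}$ via Theorem~\ref{super} and then invokes Proposition~\ref{schach} to get the uniform-in-$t$ bound, precisely as the paper does. The only (harmless, arguably cleaner) difference is that in the forward direction you argue directly and obtain the quantitative bound $\sup_Q Q(A^n)\le c_n/C_n$, whereas the paper runs the same estimate as a proof by contradiction.
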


\section{The Proofs of Theorems~\ref{1}, \ref{2} and \ref{strong} }\label{proofs}
In order to prove the theorems we will first give
characterizations of NAA1($\la_n)$ and NAA2($\la_n$) (Lemma~\ref{L} below). These characterizations will then
immediately give Theorem~\ref{2} by the use of Propostion~\ref{HS2}. For the proof of Theorem~\ref{1} we will
need Proposition~\ref{HS1} and, in addition, that the set $\mathcal{M}(\la_n)$ is closed under countable convex
combinations. As this result is in fact a monotone convergence result for local martingales we will prove it
separately in the next section, see Section~\ref{locmart}.

\begin{lemma}\label{L}
\begin{enumerate}
\item[]
\item[(1)]
There does not exists an asymptotic arbitrage of the first kind with transaction costs $\la_n$ if and only if for
each $\ep>0$ there exists $\delta>0$ such that for each $n$ and $A^n\in\mathcal{F}^n$ with $P^n(A^n)\geq\ep$
there is $\widetilde{Q}^n\in\mathcal{M}^n(\la_n)$ with $\widetilde{Q}^n(A^n)\geq\delta$.
\item[(2)]
There does not exists an asymptotic arbitrage of the second kind with transaction costs $\la_n$ if and only if for
each $\ep>0$ there exists $\delta>0$ such that for each $n$ and $A^n\in\mathcal{F}^n$ with $P^n(A^n)<\delta$
there is $\widetilde{Q}^n\in\mathcal{M}^n(\la_n)$ with $\widetilde{Q}^n(A^n)<\ep$.
\end{enumerate}
\end{lemma}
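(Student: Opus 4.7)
The plan is to prove the two parts in parallel using a common building block. For any $\la$--CPS $(Q,\widetilde{S})$ on market $n$ and any admissible zero--endowment strategy $\phi=(\phi^0,\phi^1)$, I would introduce the auxiliary process $\widetilde{V}^{\widetilde{S}}_t(\phi) := \phi^0_t + \phi^1_t \widetilde{S}_t$. The pointwise bound $(1-\la)S_t \leq \widetilde{S}_t \leq S_t$ immediately gives $\widetilde{V}^{\widetilde{S}}_t(\phi) \geq V^{\la,S}_t(\phi)$, while a short computation using the self--financing inequalities in \eqref{selff} yields $d\widetilde{V}^{\widetilde{S}}_t(\phi) \leq \phi^1_t\,d\widetilde{S}_t$. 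Admissibility and the dominance relation then force $\widetilde{V}^{\widetilde{S}}(\phi)$ to be a $Q$--supermartingale started at $0$; in particular $E_Q[\widetilde{V}^{\widetilde{S}}_{T^n}(\phi)] \leq 0$.

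With this in hand, the two ``if'' directions become straightforward expectation arguments. For (1), an AA1($\la_n$) sequence $\phi^n$ with $V^{\la_n,S^n}_t(\phi^n) \geq -c_n$ and $A^n := \{V^{\la_n,S^n}_{T^n}(\phi^n) \geq C_n\}$, $P^n(A^n)\geq\ep$, would by hypothesis yield $\widetilde{Q}^n \in \mathcal{M}^n(\la_n)$ with $\widetilde{Q}^n(A^n) \geq \delta$, whence $0 \geq E_{\widetilde{Q}^n}[\widetilde{V}^{\widetilde{S}^n}_{T^n}(\phi^n)] \geq C_n\delta - c_n \to \infty$, a contradiction. For (2), applied to $A^n := \{V^{\la_n,S^n}_{T^n}(\phi^n) < \al\}$ (whose $P^n$--measure tends to $0$) with any fixed $\ep < \al/(\al+1)$, the hypothesis produces $\widetilde{Q}^n(A^n) < \ep$ for large $n$, and then $0 \geq E_{\widetilde{Q}^n}[\widetilde{V}^{\widetilde{S}^n}_{T^n}(\phi^n)] \geq \al(1-\ep) - \ep > 0$, again a contradiction.

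For the ``only if'' directions I would proceed by contrapositive and construct an asymptotic arbitrage. If the criterion in (1) fails, extract $\ep_0 > 0$ and $A^n$ (along a subsequence) with $P^n(A^n) \geq \ep_0$ and $a_n := \sup_{Q\in\mathcal{M}^n(\la_n)} Q(A^n) \to 0$. The super--replication duality with transaction costs encapsulated in Proposition~\ref{schach} (applicable because Assumption~\ref{NA} supplies NA$(\la)$ for \emph{every} $\la>0$) yields, for each $n$ and any $\eta_n \downarrow 0$, an admissible zero--endowment strategy $\phi^n$ with $V^{\la_n,S^n}_t(\phi^n) \geq -(a_n+\eta_n)$ uniformly in $t$ and $V^{\la_n,S^n}_{T^n}(\phi^n) \geq \ind_{A^n} - (a_n+\eta_n)$. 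Rescaling by $(a_n+\eta_n)^{-1/2}$ produces a strategy realising AA1($\la_n$) with $c_n := (a_n+\eta_n)^{1/2} \to 0$ and $V_{T^n}^{\la_n,S^n} \to \infty$ on $A^n$. Part (2) is treated analogously by applying the same duality to $B^n := (A^n)^c$, where $\sup_Q Q(B^n) \leq 1-\ep_0$, to obtain a strategy whose terminal value is $\geq \ep_0/2$ on $B^n$ (and $P^n(B^n) \to 1$) while the running value stays bounded below by $-1$.

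The main obstacle is precisely this invocation of the super--replication duality: I need the minimum super--replicating cost to equal $a_n = \sup_{Q\in\mathcal{M}^n(\la_n)} Q(A^n)$ \emph{and} the super--replicating strategy to remain uniformly bounded below at all times $t \leq T^n$, not just at $T^n$. Under transaction costs these two demands are genuinely distinct (as Schachermayer's example mentioned in the introduction shows), and it is exactly Proposition~\ref{schach} --- the reason Assumption~\ref{NA} is imposed for all $\la>0$ rather than just $\la_n$ --- that delivers the same lower bound at every $t$ that is required by Definitions~\ref{AA1} and \ref{AA2}. Once this ``uniform bound'' step is granted, the rest of the argument is a routine scaling and expectation bookkeeping.
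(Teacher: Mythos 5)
Your proposal follows essentially the same route as the paper: the ``if'' directions rest on the supermartingale property of the shadow value process $\phi^0_t+\phi^1_t\widetilde S_t$ (the paper's Lemmas~\ref{blum} and~\ref{grr}), and the ``only if'' directions combine the superreplication theorem (Theorem~\ref{super} in the paper --- you attribute the duality to Proposition~\ref{schach}, but you do use both results in their correct roles) with Proposition~\ref{schach} to upgrade the terminal lower bound to a uniform-in-$t$ one; your superreplication of $\ind_{A^n}-(a_n+\eta_n)$ followed by rescaling is just a repackaging of the paper's direct choice $f^k=\delta_k^{-1/2}\ind_{A^{n_k}}-2\sqrt{\delta_k}\,\ind_{(A^{n_k})^c}$. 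The one step you gloss over with ``along a subsequence'' is the justification that the market indices produced by negating the $\ep$--$\delta$ condition may be assumed to tend to infinity (as required to speak of a subsequence of markets in Definitions~\ref{AA1} and~\ref{AA2}); the paper devotes a separate remark to showing that a bounded sequence of indices would contradict Assumption~\ref{NA}.
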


\begin{proof}[Proof of Lemma~\ref{L}]
(1) Assume that NAA1($\la_n)$ holds. We will show the $\ep$--$\delta$--condition. Assume to the contrary that there exists $\ep>0$ such that for all $\delta>0$  there is $n$ and $A^n\in\mathcal{F}^n$ with $P^n(A^n)\geq \ep$ but $\sup_{Q\in\mathcal{M}^n(\la_n)}Q(A^n)<\delta$. Choose $\delta_k<\frac12$ such that $\delta_k\to0$ and assume that the corresponding $n_k\to\infty$ (this is w.l.o.g., see the remark below). Define
$$f^k=\frac1{\sqrt{\delta_k}}\ind_{A^{n_k}}-2\sqrt{\delta_k}\ind_{\Omega^{n_k}\setminus A^{n_k}}.$$
For each $\widetilde{Q}^{n_k}\in\mathcal{M}^{n_k}(\la_{n_k})$ and the corresponding $\widetilde{S}^{n_k}$ we have that
\begin{align}
\E_{\widetilde{Q}^{n_k}}[f^k]
&=\frac1{\sqrt{\delta_k}}\widetilde{Q}^{n_k}(A^{n_k})-2\sqrt{\delta_k}(1-\widetilde{Q}^{n_k}(A^{n_k}))\nonumber\\
&\leq \frac1{\sqrt{\delta_k}}\delta_k-2\sqrt{\delta_k}(1-\delta_k)
=-\delta_k^{\frac12}+2\delta_k^{\frac32}\leq 0,\nonumber
\end{align}
which holds as $\delta_k<\frac12$.
By the Superreplication Theorem (see Theorem~\ref{super} of the Appendix) $f^k=V_{T^{n_k}}^{\la_{n_k},S^{n_k}}(\phi^k)$ for some $\la_{n_k}$--admissible strategy $\phi^k$. Note that $f^k\geq-2\sqrt{\delta_k}$ a.s., so we get by Proposition~\ref{schach} of the Appendix that $V_t^{\la_{n_k},S^{n_k}}(\phi^k)\geq-2\sqrt{\delta_k}$ a.s. for all $0\leq t\leq T^{n_k}$.
Moreover, we have that
$$P^{n_k}(V_{T^{n_k}}^{\la_{n_k},S^{n_k}}(\phi^k)\geq \frac1{\sqrt{\delta_k}})=P^{n_k}(A^{n_k})\geq\ep.$$ This gives an AA1($\la_{n}$), which is a contradiction.
\newline\newline
\noindent In order to prove the other direction, assume that the $\ep$--$\delta$--condition of (1) holds. Assume that there is an AA1($\la_n$). Then there exists $\ep>0$ and a subsequence again denoted by $n$ and admissible integrands $\phi^n$ such that $P^n(V_{T_n}^{\la_n,S^n}(\phi^n)\geq C_n)\geq\ep$, for all $n$, and $V^{\la_n,S^n}_t(\phi^n)\geq -c_n$ a.s. for $0\leq t\leq T_n$. By assumption there exists $\delta>0$ such that, for each $n$, there is $\widetilde{Q}^n\in \mathcal{M}^n(\la_n)$ with
$$\widetilde{Q}^n(V^{\la_n,S^n}_{T_n}(\phi^n)\geq C_n)\geq\delta.$$
Let $\widetilde{S}^n$ be the corresponding $\widetilde{Q}^n$--local martingale, then, by Lemma~\ref{blum} of the Appendix we have that
$$V^{0,\widetilde{S}^n}_t(\phi^n)\geq  V^{\la_n, S^n}_t(\phi^n)\text{ a.s.}$$ for each $0\leq t\leq T_n$. Therefore the strategies $\phi^n$ define an asymptotic arbitrage with transaction costs 0 for the assets $\widetilde{S}^n$. By Lemma~\ref{grr} we have that
$$\E_{\widetilde{Q}^n}[V^{0,\widetilde{S}^n}_{T_n}(\phi^n)]\leq0.$$
On the other hand, we get
$$\E_{\widetilde{Q}^n}[V^{0,\widetilde{S}^n}_{T_n}(\phi^n)]\geq C_n\widetilde{Q}^n(V^{\la_n,S^n}_{T_n}(\phi^n)\geq C_n)-c_n\geq C_n\delta-c_n>0,$$
for $n$ large enough. This is a contradiction.
\newline\newline
\noindent (2) Assume that NAA2($\la_n$) holds. We will show the $\ep$--$\delta$--condition.
Assume to the contrary that there exists $\ep>0$ such that for all $\delta>0$  there is $n$ and $A^n\in\mathcal{F}^n$ with $P^n(A^n)<\delta$ but $\inf_{Q\in\mathcal{M}^n(\la_n)}Q(A^n)\geq\ep$. Choose  $\delta_k\to0$ and assume that the corresponding $n_k\to\infty$ (w.l.o.g. we can choose such $\delta_k$, see again the remark below). Let
$$f^k=\ep\ind_{\Omega^{n_k}\setminus A^{n_k}}-\ind_{A^{n_k}}.$$
Similarly as in the proof of part (1), for each $\widetilde{Q}^{n_k}\in\mathcal{M}^{n_k}(\la_{n_k})$ and the corresponding $\widetilde{S}^{n_k}$ we have that
\begin{align}
\E_{\widetilde{Q}^{n_k}}[f^k]
&=\ep(1-\widetilde{Q}^{n_k}(A^{n_k}))-\widetilde{Q}^{n_k}(A^{n_k})\nonumber\\
&\leq \ep(1-\ep)-\ep=-\ep^2\leq0.\nonumber
\end{align}
Again by the Superreplication Theorem~\ref{super} $f^k= V^{\la_{n_k}, S^{n_k}}_{T_{n_k}}(\phi^k)$ for some admissible strategy $\phi^k$. Note that $f^k\geq-1$ a.s., so we get by Proposition~\ref{schach} of the Appendix that $V^{\la_{n_k},S^{n_k}}_t(\phi^k)\geq-1$ a.s. for all $0\leq t\leq T^{n_k}$.
Moreover, we have that
$$P^{n_k}(V^{\la_{n_k},S^{n_k}}_{T_{n_k}}(\phi^k)\geq \ep)=1-P^{n_k}(A^{n_k})\geq1-\delta_k.$$ This gives an AA2($\la_{n}$).
\newline\newline
\noindent In order to prove the other direction, assume that the $\ep$--$\delta$--condition of (2) holds. Assume that there is an AA2($\la_n$). Then there exists $\alpha>0$ a subsequence again denoted by $n$ and admissible integrands $\phi^n$ such that
$V^{\la_n, S^{n}}_t(\phi^n)\geq -1$ a.s. for $0\leq t\leq T_n$ and
$P^n(V^{\la_n,S^n}_{T_n}(\phi^n)\geq \alpha)=:1-\delta_n$, for all $n$, where $\delta_n\to0$. By assumption, for $\ep_k=2^{-k}$ there is $n(k)$ (such that $\delta_{n(k)}$ small enough) and $\widetilde{Q}^{k}\in \mathcal{M}^{n_k}(\la_{n_k})$ with
$$\widetilde{Q}^{k}(V^{\la_{n_k}, {S^{n_k}}}_{T_{n_k}}(\phi^{n_k})< \alpha)<2^{-k}.$$
Let $\widetilde{S}^k$ be the corresponding $\widetilde{Q}^k$--local martingale, then, again by Lemma~\ref{blum}
of the Appendix we have that
$$V^{0,\widetilde{S}^k}_t(\phi^{n_k})\geq  V^{\la_{n_k}, S^{n_k}}_t(\phi^{n_k},S^{n_k})\text{ a.s.}$$ for each $0\leq t\leq T_k$. Therefore the strategies $\phi^{n_k}$ define an asymptotic arbitrage without transaction costs 0 for the assets $\widetilde{S}^k$. Again by Lemma~\ref{grr} we have that
$$\E_{\widetilde{Q}^{k}}[V^{0,\widetilde{S}^k}_{T_{n_k}}(\phi^{n_k})]\leq0.$$
On the other hand, we get
\begin{align}
\E_{\widetilde{Q}^k}[V^{0,\widetilde{S}^k}_{T_{n_k}}(\phi^{n_k})] &\geq -1\widetilde{Q}^k(V^{\la_{n_k},S^{n_k}}_{T_{n_k}}(\phi^{n_k})<\alpha)+\alpha \widetilde{Q}^n(V^{\la_{n_k},S^{n_k}}_{T_{n_k}}(\phi^{n_k})\geq\alpha)\nonumber\\
&\geq -2^{-k}+\alpha(1-2^{-k})>0,\nonumber
\end{align}
for $k$ large enough. This is a contradiction.
\end{proof}

\begin{remark}
\begin{enumerate}
\item In the first part of the proof of (1) we can choose $\delta_k\to 0$ such that the corresponding $n_k\to\infty$. Indeed, choose $\delta_k<2^{-k}$ and assume that $n_k\leq C$, for $C<\infty$. Then there is $N$ such that there is another subsequence (again denoted by $n_k$ with $n_k=N$ for all $k$).  Let $P=P^N$. This means that, for each $k$ there is $A_k\in\mathcal{F}^N$ with $P(A_k)\geq \ep$
    and $Q(A_k)\leq\delta_k$ for all $Q\in\mathcal{M}^N(\la_n)$. Define $A=\limsup A_k=\bigcap_{m}\bigcup_{k\geq m}A_k$. Then $P(A)\geq\limsup P(A^k)\geq\ep$.
    On the other hand, we have that, for each $Q$ and for each $m$,
    $$Q(A)\leq Q(\bigcup_{k\geq m}A_k)\leq \sum_{k=m}^{\infty}Q(A_k)<\sum_{k=m}^{\infty}\delta_k<2^{-(m-1)}.$$
    As this holds for any $m$, we have that $Q(A)=0$ for each $Q\in\mathcal{M}^N(\la_n)$. Hence $P\not\ll Q$ for any $Q$. This is a contradiction to Assumption~\ref{NA}, as there has to be at least one equivalent measure.
 \item In the first part of the proof of (2) we can choose $\delta_k\to 0$ such that the corresponding $n_k\to\infty$. Indeed, choose again $\delta_k<2^{-k}$ and assume that $n_k\leq C$, for $C<\infty$. Then there is $N$ such that there is another subsequence (again denoted by $n_k$ with $n_k=N$ for all $k$). Let $P=P^N$ This means that, for each $k$ there is $A_k\in\mathcal{F}^N$ with $P(A_k)<\delta_k$
    and $Q(A_k)\geq\ep$ for all $Q\in\mathcal{M}^N(\la_n)$. Define again $A=\limsup A_k=\bigcap_{m}\bigcup_{k\geq m}A_k$. Then
    $P(A)\leq \sum_{k=m}^{\infty}P(A^k)\leq \sum_{k=m}^{\infty}\delta_k<2^{-(m-1)}$. As this holds for any $m$ we have that $P(A)=0$.
    On the other hand, we have that, for each $Q$, $\E_Q[A]\geq \limsup Q(A_k)\geq \ep$. Hence $Q\not\ll P$ for each $Q\in\mathcal{M}^N(\la_n)$.
     This is again a contradiction to Assumption~\ref{NA}, as there has to be at least one equivalent measure.
    \end{enumerate}
\end{remark}

\begin{proof}[Proof of Theorem~\ref{1} and Theorem~\ref{2}:]
The proof of Theorem~\ref{2} follows by applying Lemma~\ref{L}, (2), to the sets $\mathcal{M}^n(\la_n)$ and then applying Proposition~\ref{HS2} of the Appendix.
\newline\newline
\noindent
For the proof of Theorem~\ref{1} note that Lemma~\ref{L}, (1),  and Proposition~\ref{HS1} give for each $\ep>0$ a $\delta>0$ and a sequence $\widetilde{Q}^{n,\ep}\in
\mathcal{M}^n(\la_n)$ such that for all $n$ and $A^n\in\mathcal{F}^n$ with $\widetilde{Q}^{n,\ep}(A^n)<\delta$ we have that $P^n(A^n)<\ep$. In order to get rid of the dependence on $\ep$, we argue as in \cite{K:S:1996} (for example). Take $\ep_m=\frac1{m}$, $m\in\N$, and define, for each $n$,
$$\widetilde{Q}^n=\sum_{m=1}^{\infty}2^{-m}\widetilde{Q}^{n,\ep_m}.$$
Then $\widetilde{Q}^n\in\mathcal{M}^n(\la_n)$ as $\mathcal{M}^n(\la_n)$ is closed under countable convex combinations, see Corollary~\ref{Conv}. It is not hard to see that, moreover, $(P^n)\con (\widetilde{Q}^n)$, see \cite{K:S:1996}, proof of Theorem~2.1.
\end{proof}

Finally we present the  easy proof of Theorem~\ref{strong}.

\begin{proof}[Proof of Theorem~\ref{strong}]

Suppose that there is a strong asymptotic arbitrage opportunity $V^{\la{n_k},S^{n_k}}(\phi^k)$ with transaction costs $\la_{n_k}$.
Assume that for each sequence $A^n\in\mathcal{F}^n$ with $P^n(A^n)\to1$
there exists a sequence $\widetilde{Q}^n\in \mathcal{M}^n(\la_n)$ such that  we have that $\widetilde{Q}^n(A^n)\geq \alpha$ for some $\alpha>0$. Define $A_k=\{V^{\la_{n_k}, S^{n_k}}_{T_{n_k}}(\phi^k)\geq C_k\}$. Let $(\widetilde{Q}^{n_k}, \widetilde{S}^{n_k})$ be the above corresponding sequence of $\la_{n_k}$--CPS for the sequence $(A_k)$. As $P^{n_k}(A^k)\to1$ by assumption, we have that $\widetilde{Q}^{n_k}(A^k)\geq\alpha$, for all $k$, and some $\alpha>0$. As above we have that
$$V_t^{0,\widetilde{S}^{n_k}}(\phi^k)\geq V_t^{\la_{n_k},S^{n_k}}(\phi^k)\text{ a.s. for all $t$},$$
and
$$\E_{\widetilde{Q}^{n_k}}[V^{0,\widetilde{S}^{n_k}}_{T_{n_k}}(\phi^k)]\leq 0.$$
On the other hand we get
$$\E_{\widetilde{Q}^{n_k}}[V^{0,\widetilde{S}^{n_k}}_{T_{n_k}}(\phi^k)]\geq C_k\widetilde{Q}^{n_k}(A^k)-c_k\geq C_k\alpha-c_k>0,$$
for $k$ large enough. This is a contradiction.
\newline\newline
For the other direction assume that $\pn$  is entirely asymptotically separable from
the sequence of the upper envelopes of the sets $\mathcal{M}^n(\la_n)$. Let $n_k$ be a subsequence and $A^k$ be the sequence of sets $A^k\in\mathcal{F}^{n_k}$ with $P^{n_k}(A^k)\to1$. By assumption $\sup_{Q^{n_k}\in\mathcal{M}^{n_k}(\la_{n_k})}\Q^{n_k}(A^k)=:\delta_k\to0$.   W.l.o.g. $\delta_k<\frac12$, for all $k$. Define
$$f_k=\frac1{\sqrt{\delta_k}}\ind_{A^k}-2\sqrt{\delta_k}.$$
We have that
\begin{align}
\E_{\widetilde{Q}^{n_k}}[f^k]
&=\frac1{\sqrt{\delta_k}}\widetilde{Q}^{n_k}(A^{k})-2\sqrt{\delta_k}(1-\widetilde{Q}^{n_k}(A^{k}))\nonumber\\
&\leq \frac1{\sqrt{\delta_k}}\delta_k-2\sqrt{\delta_k}(1-\delta_k)
=-\delta_k^{\frac12}+2\delta_k^{\frac32}\leq 0.\nonumber
\end{align}

Note that $f^k\ge -2\sqrt{\delta_k}$ and $P^{n_k}(f^k\geq \frac1{\sqrt{\delta_k}})=P^{n_k}(A^k)\to1.$ As in the proof of Lemma~\ref{L}, part (1), this gives a strong asymptotic arbitrage.

\end{proof}

\section{Monotone convergence for local martingales and an application to the set $\mathcal{M}(\la)$}\label{locmart}
We start with an observation on local martingales which will provide a monotone convergence result for local martingales. Note that in this section we do not assume that the local martingales have continuous paths, we only assume that we take the c\`adl\`ag modifications.
\begin{lemma}\label{red}
Let $(N_t)_{0\leq t\leq T}$ be a local martingale and $(U_t)_{0\leq t\leq T}$ be a martingale such that, for all $0\leq t\leq T$,
$$0\leq N_t\leq U_t\quad\text{a.s.}$$
Then $(N_t)_{0\leq t\leq T}$ is a martingale.
\end{lemma}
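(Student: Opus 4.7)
The plan is to localize $N$ and use $U$ as a uniform majorant to pass to the limit in the conditional expectations. Let $(\tau_k)_{k\in\N}$ be a localizing sequence for $N$, so that each stopped process $(N_{t\wedge\tau_k})_{0\leq t\leq T}$ is a martingale and $\tau_k\nearrow\infty$ (equivalently, $\tau_k\geq T$ eventually a.s. on $[0,T]$). Then for $0\leq s\leq t\leq T$,
\begin{equation}
\E[N_{t\wedge\tau_k}\mid \cF_s]=N_{s\wedge\tau_k}\quad\text{a.s.}
\end{equation}
and the goal is to take $k\to\infty$ on both sides.

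First I would note that $N_t$ is integrable for each $t$, since $0\leq N_t\leq U_t$ and $U$ is a martingale. Next, by right-continuity of $N$, we have $N_{t\wedge\tau_k}\to N_t$ and $N_{s\wedge\tau_k}\to N_s$ almost surely as $k\to\infty$.

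The key step, which I expect to be the main technical point, is establishing uniform integrability of the family $\{N_{t\wedge\tau_k}\}_{k\in\N}$ so that the a.s.\ convergence upgrades to $L^1$-convergence and one may pass to the limit in the conditional expectation. Here the dominating martingale $U$ does the work: since $t\wedge\tau_k$ is a bounded stopping time and $U$ is a martingale on $[0,T]$, Doob's optional sampling theorem gives
\begin{equation}
U_{t\wedge\tau_k}=\E[U_T\mid \cF_{t\wedge\tau_k}]\quad\text{a.s.}
\end{equation}
Since any family of conditional expectations of a fixed integrable random variable is uniformly integrable, $\{U_{t\wedge\tau_k}\}_k$ is UI, and the sandwich $0\leq N_{t\wedge\tau_k}\leq U_{t\wedge\tau_k}$ transfers uniform integrability to $\{N_{t\wedge\tau_k}\}_k$.

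Finally, a.s.\ convergence plus uniform integrability yields $N_{t\wedge\tau_k}\to N_t$ in $L^1$ (and likewise at time $s$), so conditional expectations converge in $L^1$, giving
\begin{equation}
\E[N_t\mid \cF_s]=\lim_{k\to\infty}\E[N_{t\wedge\tau_k}\mid \cF_s]=\lim_{k\to\infty}N_{s\wedge\tau_k}=N_s\quad\text{a.s.}
\end{equation}
This is the desired martingale property. The only place where care is needed is the use of optional sampling on $U$, which requires the stopping times to be bounded by $T$; this is automatic here since everything lives on the finite horizon $[0,T]$.
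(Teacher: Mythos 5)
Your argument is correct and follows essentially the same route as the paper: localize $N$, dominate the stopped process by the stopped majorant $U$, deduce uniform integrability of $\{N_{t\wedge\tau_k}\}_k$ from that of $\{U_{t\wedge\tau_k}\}_k$, and pass to the limit in the martingale identity. The only point you elide, and which the paper treats at some length, is the passage from the hypothesis that $N_t\leq U_t$ a.s.\ for each \emph{fixed} $t$ (where the null set may depend on $t$) to the inequality $N_{t\wedge\tau_k}\leq U_{t\wedge\tau_k}$ at the \emph{random} time $t\wedge\tau_k$; this requires upgrading to ``$N_t\leq U_t$ for all $t$ simultaneously outside a single null set,'' which holds because both processes are right-continuous, so one first takes the union of the exceptional sets over rational $t$ and then extends to all $t$ by right limits. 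With that one-line repair your proof coincides with the paper's.
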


\begin{proof}
As $(U_t)_{0\leq t\leq T}$ is defined on the closed time interval $[0,T]$ it is a uniformly integrable martingale. Moreover, the family $\{U_{\tau}\}$ where $\tau$ runs through the set of all stopping times, which are $\leq T$ a.s., is  uniformly integrable as well. Let $\tau_k\uparrow\infty$ be a sequence of stopping times such that, for each $k$, the stopped process $N^{\tau_k}$ is a martingale. Then, we have that, for each $t\leq T$ and for each arbitrary constant $C>0$,
\begin{equation}\label{ui}
\sup_{k}E[N_{t\wedge\tau_k}\ind_{\{N_{t\wedge\tau_k}\geq C\}}]\leq
\sup_{k}E[U_{t\wedge\tau_k}\ind_{\{U_{t\wedge\tau_k}\geq C\}}].
\end{equation}
This holds as $N_t\leq U_t$ a.s. for each $t$ and because of the right--continuity of both processes, i.e. let $A_q=\{N_q>U_q\}$, for $q\in([0,T]\cap\Q)\cup\{T\}$. By assumption $P(A_q)=0$ for each $q$. Define $A=\bigcup_{q\in[0,T]\cap\Q)\cup\{T\}}A_q$, then still $P(A)=0$.  Let $\tilde{\Omega}=(\Omega\setminus A)\cap \Omega_N\cap\Omega_U$, where $\Omega_N$, $\Omega_U$ are the sets of probability 1 where the paths of $N$, $U$, respectively are right--continuous, then $P(\tilde{\Omega})=1$. Let $t$ be arbitrary and $\omega\in\tilde{\Omega}$. Then
$$N_t(\omega)=\lim_{q_n\downarrow t} N_{q_n}(\omega)\leq \lim_{q_n\downarrow t }U_{q_n}(\omega)= U_{t}(\omega).$$
So (\ref{ui}) indeed holds as $P(N_t\leq U_t\text{ for all $t$})\geq P(\tilde{\Omega})=1$. As $\{U_{t\wedge\tau_k}, k\in\N\}$ is uniformly integrable, the right hand side of (\ref{ui}) tends to 0 for $C\to\infty$, therefore $\{N_{t\wedge\tau_k}, k\in\N\}$ is uniformly integrable as well. Moreover, as $\tau_k\uparrow\infty$, we have that $N_{t\wedge\tau_k}\to N_t$ a.s. and hence, by uniform integrability, the convergence holds in $L^1$ as well. Therefore we get, for $s<t$,
\begin{align}
E[N_t|\mathcal{F}_s] &=E[\lim_{k\to\infty}N_{t\wedge\tau_k}|\mathcal{F}_s]\nonumber\\
&=
\lim_{k\to\infty}E[N_{t\wedge\tau_k}|\mathcal{F}_s]\nonumber\\
&=\lim_{k\to\infty}N_{s\wedge\tau_k}=N_s,\nonumber
\end{align}
by the above and by the martingale property of $N^{\tau_k}$. Hence $N$ is a martingale.
\end{proof}

Moreover we will use the following well-known Lemma. For the convenience of the reader we give the easy proof.
\begin{lemma}\label{rightcon}
Let $(X_t)$ be a c\`adl\`ag supermartingale with $X_t\geq 0$ a.s. for each $t$. Then $t\mapsto E[X_t]$ is right--continuous.
\end{lemma}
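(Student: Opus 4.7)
The plan is to combine the monotonicity of $t\mapsto E[X_t]$ coming from the supermartingale property with Fatou's lemma applied along a sequence $s_n\downarrow t$. Fix $t\in[0,T)$ and a sequence $s_n\downarrow t$. First I would observe that by the supermartingale property, $E[X_{s_n}]\le E[X_t]$ for every $n$, hence
\begin{equation*}
\limsup_{n\to\infty} E[X_{s_n}] \le E[X_t].
\end{equation*}

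Next, the càdlàg property of the paths yields $X_{s_n}\to X_t$ a.s.\ as $n\to\infty$. Since $X_{s_n}\ge 0$, Fatou's lemma gives
\begin{equation*}
E[X_t] = E\bigl[\liminf_{n\to\infty} X_{s_n}\bigr] \le \liminf_{n\to\infty} E[X_{s_n}].
\end{equation*}
Combining the two inequalities yields $\lim_{n\to\infty} E[X_{s_n}] = E[X_t]$. Since this holds for every sequence $s_n\downarrow t$, the map $s\mapsto E[X_s]$ is right-continuous at $t$.

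There is really no serious obstacle here; the nonnegativity assumption is exactly what allows the use of Fatou in one direction, while the supermartingale inequality provides the matching bound in the other direction without any need for uniform integrability or a dominated convergence argument. The argument works verbatim at any $t\in[0,T)$, which is all that is required for right-continuity.
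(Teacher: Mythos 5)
Your proof is correct and follows essentially the same route as the paper: the supermartingale inequality gives $\limsup_n E[X_{s_n}]\le E[X_t]$, and Fatou's lemma combined with nonnegativity and right-continuity of the paths gives the reverse bound. The only (cosmetic) difference is that the paper routes the second inequality through the conditional expectations $E[X_{s_n}\mid\mathcal{F}_t]$ and a dominated convergence step, whereas you apply Fatou directly to the unconditional expectations, which is slightly cleaner.
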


\begin{proof}
Let $t_n\downarrow t$. By right continuity of the paths we have that $X_{t_n}\to X_t$ a.s. By the supermartingale property we have, for each $n$, as $t_n\geq t$,
$$0\leq E[X_{t_n}|\mathcal{F}_t]\leq X_t.$$
So we get that
\begin{align}
\lim_{n\to\infty} E[X_{t_n}] &=\lim_{n\to\infty}E[E[X_{t_n}|\mathcal{F}_t]]\nonumber\\
&=E[\lim_{n\to\infty}E[X_{t_n}|\mathcal{F}_t]]\label{y1}\\
&\geq E[E[\lim_{n\to\infty}X_{t_n}|\mathcal{F}_t]]=E[X_t]\label{y2},
\end{align}
where (\ref{y1}) follows by dominated convergence and (\ref{y2}) by Fatou and by the right--continuity of the paths.
On the other hand, as $t_n\geq t$, for all $n$, we have that
$$E[X_{t_n}]\leq E[X_t],$$
by the supermartingale property. Therefore $\lim_{n\to\infty}E[X_{t_n}]=E[X_t]$, hence $t\mapsto E[X_t]$ is right--continuous.
\end{proof}

Now we have all tools to prove a monotone convergence result for local martingales. From this result the closedness for countable convex combinations of the set $\mathcal{M}(\la)$ immediately follows.

\begin{proposition}[Monotone convergence of local martingales]\label{mon}
Let $(M^n_t)_{0\leq t\leq T}$ be a sequence of local martingales on $(\Om,\mathcal{F},(\mathcal{F}_t)_{0\leq t\leq T},P)$ such that $M^{n+1}_t\geq M^n_t\geq0$, for all $t$ and $n$, and such that $M^n_t\to M_t$ a.s., for all $t$, and $E[M_0]=a<\infty$. Then $(M_t)_{0\leq t\leq T}$ is a local martingale.
\end{proposition}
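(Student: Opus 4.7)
The plan is to show that $M$ is a local martingale by constructing an explicit reducing sequence $(\tau_k)$ and applying Lemma~\ref{red} to the stopped processes. First, each $M^n$ is a nonnegative local martingale, hence a supermartingale, so $E[M^n_t]\le E[M^n_0]\le E[M_0]=a$. Monotone convergence gives $E[M_t]\le a$, and conditional monotone convergence applied to $E[M^n_t\mid\mathcal{F}_s]\le M^n_s$ shows that $M$ itself is a supermartingale. To exploit pathwise regularity I would then replace $M$ by a c\`adl\`ag modification; the only nontrivial input is right-continuity of $t\mapsto E[M_t]$, which follows from $E[M_t]=\lim_n E[M^n_t]$ together with Lemma~\ref{rightcon} applied to each $M^n$ via a $\liminf$ sandwich as $t_m\downarrow t$. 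The standard regularization theorem for supermartingales then produces the c\`adl\`ag modification, which I continue to denote by $M$.

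Next I introduce the localizing times $\tau_k:=\inf\{t\ge 0:M_t\ge k\}\wedge T$; these are stopping times because $M$ is c\`adl\`ag. Doob's maximal inequality applied to the nonnegative c\`adl\`ag supermartingale $M$ gives $\sup_{0\le t\le T}M_t<\infty$ almost surely, so $\tau_k=T$ eventually in $k$. Optional sampling for the supermartingale $M$ at the bounded stopping time $\tau_k$ gives $E[M_{\tau_k}]\le a$, so $M_{\tau_k}\in L^1$.

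The key step is to show that $(M^n)^{\tau_k}$ is a genuine martingale for every $n$ and $k$. I would first verify the pointwise domination $M^n_{t\wedge\tau_k}\le k+M_{\tau_k}$: for $t<\tau_k$ one has $M^n_t\le M_t<k$ from the definition of $\tau_k$, and for $t\ge\tau_k$ one has $M^n_{\tau_k}\le M_{\tau_k}$. Since $M^n_{t\wedge\tau_k}$ is $\mathcal{F}_t$-measurable, conditioning on $\mathcal{F}_t$ yields $0\le (M^n)^{\tau_k}_t\le U_t$ a.s., where $U_t:=E[k+M_{\tau_k}\mid\mathcal{F}_t]$ is a uniformly integrable martingale. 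Since $(M^n)^{\tau_k}$ is a nonnegative local martingale dominated by the martingale $U$, Lemma~\ref{red} gives that $(M^n)^{\tau_k}$ is a martingale.

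Finally, passing $n\to\infty$ in the identity $E[M^n_{t\wedge\tau_k}\mid\mathcal{F}_s]=M^n_{s\wedge\tau_k}$ via conditional monotone convergence yields $E[M_{t\wedge\tau_k}\mid\mathcal{F}_s]=M_{s\wedge\tau_k}$, so $M^{\tau_k}$ is a martingale; combined with $\tau_k=T$ eventually, this shows that $M$ is a local martingale with reducing sequence $(\tau_k)$. The main obstacle I expect is the c\`adl\`ag modification step together with ensuring that the pointwise inequality $M^n\le M$ holds simultaneously for all times on a full-measure set; both points are handled by right-continuity of the paths, but need to be stated cleanly so that the domination bound defining $U$ is valid. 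Once that is in place, the remainder is a clean combination of optional sampling, Doob's inequality, conditional monotone convergence, and Lemma~\ref{red}.
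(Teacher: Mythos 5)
Your strategy is sound and genuinely different from the paper's. Both proofs start the same way (the supermartingale property of $M$, right-continuity of $t\mapsto E[M_t]$ via Lemma~\ref{rightcon}, passage to a c\`adl\`ag modification) and both hinge on Lemma~\ref{red}. But where you build the localizing sequence by hand, $\tau_k=\inf\{t:M_t\ge k\}\wedge T$, and dominate $(M^n)^{\tau_k}$ by the closed martingale $U_t=E[k+M_{\tau_k}\mid\mathcal F_t]$, the paper instead applies the Doob--Meyer decomposition $M=U-A$ to the limit supermartingale, borrows the reducing sequence of the local martingale part $U$ (which dominates every $M^n$ since $A\ge0$), and then shows $0\le E[A_{t\wedge\tau_k}]\le E[M_0-M^n_0]\to0$, so the compensator vanishes and $M=U$. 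Your construction is more elementary in that it avoids Doob--Meyer; the paper's is safer at the final step because it only ever takes expectations of the stopped processes.

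That final step is where your argument has a gap. You pass to the limit in $E[M^n_{t\wedge\tau_k}\mid\mathcal F_s]=M^n_{s\wedge\tau_k}$ and replace $\lim_n M^n_{t\wedge\tau_k}$ by $M_{t\wedge\tau_k}$. The hypothesis gives $M^n_t\uparrow M_t$ a.s.\ only for each fixed deterministic $t$, with a $t$-dependent null set; after replacing $M$ by its c\`adl\`ag modification, the identity $\sup_n M^n_\rho=M_\rho$ at the random time $\rho=t\wedge\tau_k$ does not follow from right-continuity, because the pointwise supremum $\sup_n M^n$ need not be right-continuous and is only a modification of $M$, and two processes that are merely modifications of each other need not agree at a random time. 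Right-continuity does upgrade the ordering $M^n\le M$ to indistinguishability (as you note), hence gives the one-sided bound $\sup_nM^n_{\rho}\le M_{\rho}$ a.s., but not equality. The gap is reparable inside your framework: since each $(M^n)^{\tau_k}$ is a martingale, monotone convergence gives $E[\sup_n M^n_{t\wedge\tau_k}]=\lim_nE[M^n_0]=E[M_0]$, while optional sampling for the nonnegative supermartingale $M$ gives $E[M_{t\wedge\tau_k}]\le E[M_0]$; combined with the a.s.\ inequality this forces $\sup_n M^n_{t\wedge\tau_k}= M_{t\wedge\tau_k}$ a.s., after which your conditional monotone convergence step goes through. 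Note that the paper's Doob--Meyer route sidesteps this issue entirely, since it never needs convergence of $M^n$ at a random time, only inequalities there.
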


In the formulation of the corollary we omit the superscript $n$, but of course this holds, for each $n$.
\begin{corollary}\label{Conv}
 $\mathcal{M}(\la)$ is closed under countable convex combinations.
\end{corollary}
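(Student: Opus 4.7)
My plan is to take a countable sequence $(Q^n)_n \subset \mathcal{M}(\la)$ with associated local martingales $(\widetilde{S}^n)_n$ and strictly positive weights $(\al_n)$ with $\sum_n \al_n = 1$, form the convex combination $Q := \sum_n \al_n Q^n$, and explicitly exhibit a $Q$-local martingale $\widetilde{S}$ sandwiched between $(1-\la)S$ and $S$. Since each $Q^n$ is equivalent to $P$ and all weights are strictly positive, $Q$ is also equivalent to $P$, so the only nontrivial issue is to construct the process $\widetilde{S}$.

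The natural candidate comes from the Bayes rule. Let $Z^n_t := E_Q[dQ^n/dQ\mid \cF_t]$ be the density process of $Q^n$ with respect to $Q$; in our Brownian filtration these are continuous, strictly positive martingales, and one reads off from $dQ = \sum_n \al_n\, dQ^n$ that $\sum_n \al_n Z^n_t \equiv 1$ a.s.\ for every $t$. By the Bayes rule for local martingales, $\widetilde{S}^n$ is a $Q^n$-local martingale if and only if $Z^n \widetilde{S}^n$ is a $Q$-local martingale. Set
\[
\widetilde{S}_t \ := \ \sum_{n=1}^\infty \al_n Z^n_t \widetilde{S}^n_t.
\]
The inequalities $(1-\la)S_t \leq \widetilde{S}^n_t \leq S_t$ combined with $\sum_n \al_n Z^n_t = 1$ immediately yield $(1-\la)S_t \leq \widetilde{S}_t \leq S_t$, so the size requirement in Definition~\ref{CPS} will be satisfied once we know that $\widetilde{S}$ is a $Q$-local martingale.

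That remaining point is exactly where Proposition~\ref{mon} enters. Define $M^N_t := \sum_{n=1}^N \al_n Z^n_t \widetilde{S}^n_t$. Each $M^N$ is a finite sum of nonnegative $Q$-local martingales and is therefore itself a nonnegative $Q$-local martingale; $(M^N)_N$ is pointwise nondecreasing and converges a.s.\ to $\widetilde{S}$. Because $\cF_0$ is trivial (the filtration being generated by a Brownian motion), $Z^n_0 = 1$ and $\widetilde{S}_0 \leq S_0$ is deterministic, so $E_Q[\widetilde{S}_0] < \infty$. Proposition~\ref{mon} then delivers that the limit $\widetilde{S}$ is a $Q$-local martingale, and we conclude that $(Q,\widetilde{S})$ is a $\la$-CPS, i.e.\ $Q \in \mathcal{M}(\la)$.

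The main obstacle is precisely the step from finite to countable sums: summing finitely many $Q$-local martingales is trivial, but there is no elementary reason why a countable sum of nonnegative $Q$-local martingales should again be one. This is exactly the gap that the monotone convergence theorem of Section~\ref{locmart} is tailored to bridge, and it is the reason why the density process / Bayes reformulation above is the right framework: it expresses the candidate $\widetilde{S}$ as an increasing limit of nonnegative $Q$-local martingales starting from an integrable random variable, which is the precise hypothesis of Proposition~\ref{mon}.
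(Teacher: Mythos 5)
Your proof is correct and follows essentially the same route as the paper: write the candidate shadow price as a density-weighted convex combination of the $\widetilde{S}^n$, observe the sandwich property is automatic, and invoke Proposition~\ref{mon} on the increasing sequence of partial sums. The only (cosmetic) difference is that you take density processes with respect to $Q$ and apply the monotone convergence result under $Q$ directly to $\widetilde{S}$, whereas the paper uses densities with respect to $P$ and applies it under $P$ to $M=\widetilde{S}Z$; the two computations coincide after dividing by $Z_t$.
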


\begin{proof}[Proof of Proposition~\ref{mon}]
As $(M^n_t)$ is a non-negative local martingale it is a supermartingale, for each $n$.
We claim that the limiting process is a c\`adl\`ag supermartingale. Indeed, by monotone convergence and by the supermartingale property of each $M^n$ we get, for $s<t$,
$$E[M_t|\mathcal{F}_s] =E[\lim_{n\to\infty}M^n_t|\mathcal{F}_s]
=\lim_{n\to\infty}E[M^n_t|\mathcal{F}_s]\nonumber\leq \lim_{n\to\infty}M^n_s=M_s,$$
in particular $E[M_t]\leq E[M_0]=a<\infty$.
So the supermartingale property is clear.

Moreover, we will show that $t\mapsto E_P[M_t]$ is right--continuous (which implies that there is a c\`adl\`ag modification of $M$ so that we can assume that $M$ has c\`adl\`ag paths, see \cite{R:Y}). Indeed, let $q_k\downarrow t$. First
 define $a_{n,k}:=E[M^n_{q_k}]$ which is a double sequence in $\R$, where $0\leq a_{n,k}\leq E[M_0]=a$. Define the
order $(n,k)\preceq (m,l)$ if $n\leq m$ and $k\leq l$. Then we have that for $(n,k)\preceq (m,l)$
$$a_{n,k}=E[M^n_{q_k}]\leq E[M^n_{q_l}]\leq E[M^m_{q_l}]=a_{m,l},$$
where the first inequality is the supermartingale property of $M^n$ (as $q_k\geq q_l$ for $k\leq l$). The second inequality is as by assumption $M^n_{t}\leq M^m_{t}$ for all $t$ and $n\leq m$. Therefore $(a_{n,k})_{(n,k)}$ is a bounded monotone  net, hence it has a unique limit which is
\begin{equation}\lim_{k\to\infty}\lim_{n\to\infty}a_{n,k}=\lim_{n\to\infty}\lim_{k\to\infty}a_{n,k}=\sup_{n,k}a_{n,k}.\label{interchange}\end{equation}
Therefore we have that
\begin{align}\lim_{k\to\infty}E[M_{q_k}] &=\lim_{k\to\infty}E[\lim_{n\to\infty}M^n_{q_k}]\nonumber\\
&=\lim_{k\to\infty}\lim_{n\to\infty}E[M^n_{q_k}]\label{x1}\\
&=\lim_{n\to\infty}\lim_{k\to\infty}E[M^n_{q_k}]\label{x2}\\
&=\lim_{n\to\infty}E[M^n_{t}]=E[M_t],\label{x3}
\end{align}

where (\ref{x1}) is monotone convergence and the interchange of limits in (\ref{x2}) follows by (\ref{interchange}).
The first equality of (\ref{x3}) follows by Lemma~\ref{rightcon} applied to each supermartingale $M^n$ and the second equality of \ref{x3} is again monotone convergence.

So we already know that $M$ is a c\`adl\`ag supermartingale. We will show now that is even a local martingale. The Doob--Meyer decomposition gives that
$$M_t=U_t-A_t,$$
where $(U_t)_{0\leq t\leq T}$ is a local martingale under $P$ with $U_0=M_0$ and $(A_t)_{0\leq t\leq T}$ is an increasing process with $A_0=0$. As, for each $n$, we have that $M^n_t\leq M_t$ a.s., we get, in particular that
$$0\leq M^n_t\leq U_t\quad\text{a.s., $0\leq t\leq T$.}$$
Let $\tau_k\uparrow\infty$ be a sequence of stopping times such that the stopped process $U^{\tau_k}$ is a martingale, for each $k$. By Lemma~\ref{red} we have that $(\tau_k)$ is a reducing sequence for $M^n$, each $n$, as well. That means that $(M^n)^{\tau_k}$ is a martingale, for each $n$. So, for each $n$ and for each $k$, we have that
$$E[M^n_0]=E[M^n_{t\wedge\tau_k}]\leq E[U_{t\wedge\tau_k}]-E[A_{t\wedge\tau_k}]=E[M_0]-E[A_{t\wedge\tau_k}].$$
So, for each $n$,
$$0\leq E[A_{t\wedge\tau_k}]\leq E[M_0-M_0^n].$$
As $E[M_0-M_0^n]\to0$ for $n\to\infty$

\begin{equation}E[A_{t\wedge\tau_k}]=0,\label{x4}\end{equation}
a.s. for each $k$. Moreover, as $A$ is increasing, we have that $A_{t\wedge\tau_k}\uparrow A_t$ a.s., so (\ref{x4}) gives that $E[A_t]=0$, for each $t$. But as $A_t\geq0$ we therefore get $A_t=0$ a.s. for each $t$. So we have that $M_t=U_t$ a.s., and $U$ was a local martingale by assumption. This finishes the proof.
\end{proof}

\begin{proof}[Proof of Corollary~\ref{Conv}]
Let ${Q}^m$, $m=1,2,\dots$, be a sequence of measures in  $\mathcal{M}(\la)$ and let $(\alpha_m)$ be a sequence of   real numbers $\alpha_m\geq0$, for all $m$, and $\sum_{m=1}^{\infty}\alpha_m=1$. Define ${Q}=\sum_{m=1}^{\infty}\alpha_m{Q}^m$. We have to show that $Q\in\mathcal{M}(\la)$.

It is clear that $Q\sim P$, as for $A$ with $P(A)>0$ we have that $Q^m(A)>0$, for all $m$ and hence $Q(A)>0$. If $Q(A)>0$ there exists $m$ such that $Q^m(A)>0$ and hence $P(A)>0$.
\newline\newline
\noindent
Define $Z^m_t=E_P[\frac{dQ^m}{dP}|\mathcal{F}_t]$ and $Z_t=E_P[\frac{dQ}{dP}|\mathcal{F}_t]$, for $0\leq t\leq T$. By monotone convergence we get that
\begin{align}
\frac{dQ}{dP}&=\sum_{m=1}^{\infty}\alpha_m\frac{dQ^m}{dP}\text{ and}\label{xx}\\
Z_t &=\sum_{m=1}^{\infty}\alpha_mZ^m_t.\label{xxx}
\end{align}
 By assumption, for each $m$, there exists a $Q^m$--local martingale $\widetilde{S}^m$ such that $(Q^m,\widetilde{S}^m)$ is a $\la$--CPS.
We will show that there is $\widetilde{S}$ such that $({Q},\widetilde{S})$ is a $\la$--CPS as well. Note that as  $\widetilde{S}^m$
is a  $Q^m$--local martingale, we have that $(\widetilde{S}^m_tZ^m_t)_{0\leq t\leq T}$ is a local martingale under $P$, for each $m$.
Define
$$ \widetilde{S}_t:=\frac{\sum_{m=1}^{\infty}\alpha_m\widetilde{S}^m_tZ^m_t}{Z_t}.$$
In order to show that $\widetilde{S}$ is a local martingale with respect to $Q$ we will show that $(M_t)_{0\leq t\leq T}$ where
$$M_t:=\widetilde{S}_tZ_t=\sum_{m=1}^{\infty}\alpha_m\widetilde{S}^m_tZ^m_t$$
is a local martingale with respect to $P$. Define
$$M^N_t=\sum_{m=1}^{N}\alpha_m\widetilde{S}^m_tZ^m_t,$$
then it is clear that $M^N_t\uparrow M_t$ a.s., for each $0\leq t\leq T$. Moreover, as $M^N$ is a finite sum of c\`adl\`ag  $P$--local martingales, it is a c\`adl\`ag  $P$--local martingale as well (this holds for each $N$).
Moreover, by monotone convergence and as each $\widetilde{S}^m$ lies in the bid ask spread of $S$ we get
$$E[M_0]=\lim_{N\to\infty}E[M^N_0]=\sum_{m=1}^{\infty}\alpha_mE[\widetilde{S}^m_0Z^m_0]\leq
\sum_{m=1}^{\infty}\alpha_m{S}_0=S_0<\infty.$$
Therefore, by Proposition~\ref{mon}, $(M_t)_{0\leq t\leq T}$ is a $P$--local martingale.
\end{proof}

\section{An Example}\label{ex}
We present an example which allows a
strong asymptotic arbitrage if there are no transaction costs. If we impose the same small transaction costs $\la>0$ on each market $n$ then there does not exist any form of asymptotic arbitrage ($\la>0$ can be arbitrarily small). If we impose transaction costs $\la_n$ on market $n$ such that $\la_n\to0$ we present a lower bound for $\la_n$ such that still there does not exist any form of asymptotic arbitrage.

The idea of the example is inspired by the example $S_t=\exp{(W_t-\frac{t}{2}+t^{1/2})}$, which allows immediate arbitrage and is an easy variant of an example in \cite{De:Sch}.

Let $(W_t)_{t\geq0}$ be a standard Brownian motion on $(\Omega,\mathcal{F},(\mathcal{F}_t)_{t\geq 0},P)$.
Let $T_n$ be any increasing sequence of positive real numbers with $\lim_{n\to\infty}T_n=\infty$.
For each $n\geq1$ we define the following financial asset $(S^n_t)_{0\leq t\leq T_n}$.

\begin{equation}
\frac{dS^n_t}{S^n_t}=dW_t+\frac1{T_n(T_n-t+\alpha_n)^{\frac12}}dt,\label{S^{Tn}}
\end{equation}

where $\alpha_n=e^{-T_n^{2+\ep}}$ for some fixed $\ep>0$.  Moreover we choose $S^n_0=1$ for all $n$.
The market $n$ is defined by $(S^n_t)_{0\leq t\leq T_n}$ on $(\Omega,\mathcal{F}_{T_n},(\mathcal{F}_t)_{0\leq t\leq T_n},P|_{\mathcal{F}_{T_n}})$
and a riskless asset which is identically 1.
For each $n$, there exists a unique equivalent  martingale measure for $Q^n$ for $S^n$
given by $\frac{dQ^n}{dP}=Z^n_{T_n}$ where

\begin{equation}
Z^n_{T_n}=\exp\left(-\int_0^{T_n} \frac1{T_n(T_n-t+\alpha_n)^{\frac12}}dW_t-\frac1{2T_n^2}\int_{0}^{T_n}\frac1{T_n-t+\alpha_n}dt
\right)
\end{equation}

Indeed, for each $n$, Novikov's condition holds as

\begin{equation}\frac1{T_n^2}\int_{0}^{T_n}\frac1{T_n-t+\alpha_n}dt=\frac1{T_n^2}\left(\ln(T_n+\alpha_n)-\ln(\alpha_n)\right)=
\frac{\ln(T_n+\alpha_n)}{T_n^2}+T_n^{\ep}
<\infty.\label{Novikov_S}\end{equation}

As there exists an equivalent martingale measure there does not exist any form of arbitrage on the market $n$ (with 0 transaction costs and therefore with strictly positive transaction costs as well).

However, as  in (\ref{Novikov_S}) we have that $\lim_{n\to\infty}\left(\frac{\ln(T_n+\alpha_n)}{T_n^2}+T_n^{\ep}\right)=+\infty$, we expect some kind of asymptotic arbitrage behaviour for $n\to\infty$.
The following result and proof is analogous to Theorem 4.5 from \cite{F:S}.

\begin{proposition}
There exists a strong asymptotic arbitrage.
\end{proposition}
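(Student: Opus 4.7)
The plan is to exploit the completeness of each market $n$ (whose unique equivalent martingale measure is $Q^n$) and to show that $(P|_{\cF_{T_n}})$ is entirely asymptotically separated from $(Q^n)$, and then to convert this separation into an explicit trading strategy by mimicking the second half of the proof of Theorem~\ref{strong} (the argument still applies in the frictionless case, where $\mathcal{M}^n(0)=\{Q^n\}$).

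First, I would make the Gaussian structure of $Z^n_{T_n}$ explicit. Setting $b_n(t):=1/\bigl(T_n(T_n-t+\al_n)^{1/2}\bigr)$ and $\si_n^2:=\int_0^{T_n}b_n(t)^2\,dt$, the computation in \eqref{Novikov_S} already records that $\si_n^2=\log(T_n+\al_n)/T_n^2+T_n^{\ep}\to\infty$. Since $b_n$ is deterministic, under $P$ we have $\log Z^n_{T_n}=\si_n\xi_n-\si_n^2/2$ with $\xi_n\sim N(0,1)$; in particular $Z^n_{T_n}\to 0$ in $P$-probability. I would then set $\de_n:=\exp(-\si_n^2/4)\to 0$ and $A^n:=\{Z^n_{T_n}\le\de_n\}\in\cF_{T_n}$, for which the elementary Gaussian bound yields $P(A^n)=P(\xi_n\le\si_n/4)\to 1$, whereas
\[
Q^n(A^n)=\E_P\bigl[Z^n_{T_n}\ind_{A^n}\bigr]\le\de_n\to 0.
\]

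Second, I would convert this separation into a strong asymptotic arbitrage by replicating a suitable payoff, following the pattern used in the proof of Theorem~\ref{strong}. Consider $f^n:=\de_n^{-1/2}\ind_{A^n}-2\de_n^{1/2}$, which satisfies $f^n\ge-2\de_n^{1/2}$ and $\E_{Q^n}[f^n]\le\de_n^{-1/2}\cdot\de_n-2\de_n^{1/2}=-\de_n^{1/2}\le 0$. Because $S^n$ is complete under $Q^n$, the martingale representation theorem gives $\E_{Q^n}[f^n\mid\cF_t]=\E_{Q^n}[f^n]+\int_0^t\psi_u^n\,dS^n_u$ for some predictable integrand $\psi^n$; let $V^n_t:=\int_0^t\psi_u^n\,dS^n_u$ be the corresponding wealth process from zero endowment. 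Conditional expectations preserve the bound $f^n\ge-2\de_n^{1/2}$, and $\E_{Q^n}[f^n]\le 0$ then gives $V^n_t\ge-2\de_n^{1/2}$ for all $t\in[0,T_n]$. Moreover $V^n_{T_n}=f^n-\E_{Q^n}[f^n]\ge f^n$, so on $A^n$ one has $V^n_{T_n}\ge\de_n^{-1/2}-2\de_n^{1/2}$. Taking $c_n:=2\de_n^{1/2}\to 0$ and $C_n:=\de_n^{-1/2}-2\de_n^{1/2}\to\infty$ verifies Definition~\ref{SAA}.

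The main technical point is the calibration of $\de_n$: it has to be small enough that $Q^n(A^n)\to 0$ (so that $C_n\to\infty$), yet not so small that $P(A^n)\not\to 1$. The explicit Gaussian form of $\log Z^n_{T_n}$ makes $\de_n=\exp(-\si_n^2/4)$ a clean choice, placed halfway on the logarithmic scale between $-\si_n^2/2$ and $0$; any sequence $\de_n\to 0$ with $\log\de_n/\si_n^2\to 0^-$ would serve equally well.
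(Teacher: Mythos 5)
Your proof is correct and follows essentially the same route as the paper: you construct the same separating sets $A_n=\{Z^n_{T_n}\le e^{-\gamma\sigma_n^2}\}$ (with $\gamma=1/4$ rather than an arbitrary $\gamma<1/2$) and verify $P(A_n)\to1$, $Q^n(A_n)\to0$ via the explicit Gaussian law of $\log Z^n_{T_n}$. The only difference is that the paper then simply cites the equivalence of entire asymptotic separability with strong asymptotic arbitrage from Kabanov--Kramkov, whereas you make that step self-contained by replicating $f^n$ through martingale representation; this is a valid, slightly more explicit finish.
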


\begin{proof}
It is enough to show that there exist sets $A_n\in\mathcal{F}_{T_n}$ with $\lim_{n\to\infty}P(A_n)=1$ and $\lim_{n\to\infty}Q^n(A_n)=0$. Then the sequences of measures are entirely asymptotically separable which is equivalent to the existence of a strong asymptotic arbitrage, see \cite{Kab:Kra:1998}.

Let $\gamma<\frac12$ and $A_n=\{Z^n_{T_n}<(\tfrac{T_n+\alpha_n}{\alpha_n})^{-\frac{\gamma}{T_n^2}}\}$. Then we have that

\begin{align}
P(A_n) &=P\left(-\int_0^{T_n} \frac1{T_n(T_n-t+\alpha_n)^{\frac12}}dW_t-\frac1{2T_n^2}\ln\left(\frac{T_n+\alpha_n}{\alpha_n}\right)<
-\frac{\gamma}{T_n^2}\ln\left(\frac{T_n+\alpha_n}{\alpha_n}\right)\right)\nonumber\\
&=P\left(\frac{-\int_0^{T_n} \frac1{T_n(T_n-t+\alpha_n)^{\frac12}}dW_t}{\frac1{T_n}\sqrt{\ln\left(\frac{T_n+\alpha_n}{\alpha_n}\right)}}<\left(\frac12-\gamma\right)\frac1{T_n}\sqrt{\ln\left(
\frac{T_n+\alpha_n}{\alpha_n}\right)}
\right)\nonumber\\
&=\Phi\left(\left(\frac12-\gamma\right)\frac1{T_n}\sqrt{\ln\left(\frac{T_n+\alpha_n}{\alpha_n}\right)}\right)\nonumber\\
&=\Phi\left(\left(\frac12-\gamma\right)\sqrt{\frac{\ln(T_n+\alpha_n)}{T_n^2}+T_n^{\ep}}\right),\nonumber
\end{align}
where $\Phi$ denotes the distribution function of standard normal distribution. As $\gamma<\frac12$ and $T_n^{{\ep}
}\to\infty$ we have that $\lim_{n\to\infty}P(A_n)=1$.

%

Note that

$$\left(\frac{T_n+\alpha_n}{\alpha_n}\right)^{-\frac{\gamma}{T_n^2}}=
\exp\left(-\gamma\left(\frac{\ln(T_n+\alpha_n)}{T_n^2}+T_n^{\ep}\right)\right)\to0,$$

for $n\to\infty$. We have that

$$Q^n(A_n)=E_{P}[Z^n_{T_n}\ind_{A_n}]<\left(\frac{T_n+\alpha_n}{\alpha_n}\right)^{-\frac{\gamma}{T_n^2}}P(A_n),$$

and therefore $\lim_{n\to\infty}Q^n(A_n)=0$.
\end{proof}

We will now see that if we have transaction costs $\lambda_n>0$ on market $n$ such that $\la_n\to0$ not too fast,  then the asymptotic arbitrage disappears.

\begin{theorem}\label{Ex1} Fix transaction costs $\la_n$ on market $n$ where $\la_n>2(1-e^{-\gamma_n})$ and $\gamma_n\to0$ is defined as
$$\gamma_n=\frac{2}{T_n}e^{-\frac{T^2_n}{2}}\big(1+e^{-T_n^2(T_n^{\ep}-1)}\big)^{\frac12}.$$
Then, for each $n$, there exists a $\la_n$--CPS
$(\bar S^n,\bar Q^n)$, where $\bar Q^n$ is the unique martingale measure for $\bar S^n$, and such  that $(\bar Q^n)_{n\geq1}\vartriangleleft\vartriangleright (P)$.
\end{theorem}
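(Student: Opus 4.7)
My plan is to build an explicit shadow price $\bar S^n$ that coincides with $S^n$ up to a carefully chosen deterministic time $\tau_n$ and then continues as a driftless geometric Brownian motion under the new measure. Set $\tau_n := T_n - e^{-T_n^2}$ and let $\mu^n_t := 1/\bigl(T_n(T_n-t+\alpha_n)^{1/2}\bigr)$ be the market price of risk of $S^n$. Define
\[
\frac{d\bar Q^n}{dP} := \exp\Bigl(-\int_0^{\tau_n}\mu^n_s\,dW_s - \tfrac12\int_0^{\tau_n}(\mu^n_s)^2\,ds\Bigr),
\]
which is a probability density equivalent to $P$: the variance
\[
\int_0^{\tau_n}(\mu^n_s)^2\,ds \;=\; \frac{1}{T_n^2}\ln\frac{T_n+\alpha_n}{e^{-T_n^2}+\alpha_n}
\]
is deterministic and converges to $1$, so Novikov's condition holds trivially. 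Let $\bar W^n_t := W_t + \int_0^{t\wedge\tau_n}\mu^n_s\,ds$ be the associated $\bar Q^n$-Brownian motion and set $\bar S^n_t := S^n_t$ for $0 \le t \le \tau_n$, $\bar S^n_t := S^n_{\tau_n}\exp\bigl(W_t-W_{\tau_n}-(t-\tau_n)/2\bigr)$ for $\tau_n < t \le T_n$.

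The first step is to verify that $(\bar Q^n,\bar S^n)$ is a $\la_n$-CPS. Girsanov gives $dS^n_t/S^n_t = d\bar W^n_t$ on $[0,\tau_n]$, and on $(\tau_n,T_n]$ the increments of $\bar W^n$ equal those of $W$, so $d\bar S^n_t/\bar S^n_t = d\bar W^n_t$ on the whole interval; hence $\bar S^n$ is a geometric $\bar Q^n$-Brownian motion and in particular a $\bar Q^n$-local martingale. Explicitly, $\bar S^n_t/S^n_t = 1$ on $[0,\tau_n]$ and $\bar S^n_t/S^n_t = \exp\bigl(-\int_{\tau_n}^t\mu^n_s\,ds\bigr)$ on $(\tau_n,T_n]$, and the worst case $t=T_n$ is bounded by
\[
\int_{\tau_n}^{T_n}\mu^n_s\,ds = \tfrac{2}{T_n}\bigl(\sqrt{T_n-\tau_n+\alpha_n}-\sqrt{\alpha_n}\bigr) \le \tfrac{2}{T_n}\sqrt{e^{-T_n^2}+\alpha_n} = \gamma_n.
\]
Thus $\bar S^n_t/S^n_t \in [e^{-\gamma_n},1]$, and the hypothesis $\la_n > 2(1-e^{-\gamma_n}) \ge 1-e^{-\gamma_n}$ forces $(1-\la_n)S^n_t \le \bar S^n_t \le S^n_t$. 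Since under $\bar Q^n$ the process $\bar S^n$ is a geometric $\bar W^n$-martingale in the one-dimensional Brownian filtration generated by $W$, martingale representation makes the market complete, so $\bar Q^n$ is the unique equivalent martingale measure for $\bar S^n$.

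For the bicontiguity $(\bar Q^n)\bcon(P)$, I would control both $L^2$-norms of the densities. A standard Dol\'eans computation gives
\[
E_P\bigl[(d\bar Q^n/dP)^2\bigr] = \exp\Bigl(\int_0^{\tau_n}(\mu^n_s)^2\,ds\Bigr),
\]
and rewriting $dP/d\bar Q^n = \exp\bigl(\int_0^{\tau_n}\mu^n_s\,d\bar W^n_s - \tfrac12\int_0^{\tau_n}(\mu^n_s)^2\,ds\bigr)$ as a $\bar Q^n$-Dol\'eans exponential yields the same bound $E_{\bar Q^n}\bigl[(dP/d\bar Q^n)^2\bigr] = \exp(\int_0^{\tau_n}(\mu^n_s)^2\,ds)$. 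Both expectations are uniformly bounded in $n$, so Cauchy--Schwarz produces a constant $C$ with $\bar Q^n(A^n) \le C\sqrt{P(A^n)}$ and $P(A^n) \le C\sqrt{\bar Q^n(A^n)}$ for every $A^n \in \mathcal{F}_{T_n}$, which is exactly bicontiguity in both directions.

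The delicate point is the calibration of $\tau_n$: it must be late enough that $\int_{\tau_n}^{T_n}\mu^n_s\,ds \le \gamma_n$, keeping $\bar S^n$ inside the bid--ask spread, yet early enough that $\int_0^{\tau_n}(\mu^n_s)^2\,ds$ remains bounded, preserving bicontiguity. The explicit choice $\tau_n = T_n - e^{-T_n^2}$ makes the former integral equal exactly the $\gamma_n$ appearing in the hypothesis while the latter stays of order $1 + (\ln T_n)/T_n^2$, balancing the two constraints simultaneously.
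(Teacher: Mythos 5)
Your proof is correct, and it takes a genuinely different (and arguably cleaner) route than the paper's. The paper keeps a drift after the cutoff time --- it replaces the exponent $\tfrac12$ in the drift denominator by $\tfrac12(1-\tfrac1{T_n^2})$ on $(T_n-\alpha_n^{1/T_n^{\ep}},T_n]$ --- and, because the resulting process $\tilde S^n$ only satisfies a two--sided bound $(1-\la_n')S^n\leq\tilde S^n\leq(1+\la_n')S^n$, it must be recentred into the bid--ask spread by the multiplicative constant $c(\la_n)=\tfrac{2-\la_n}{2}$; this is where the factor $2$ in the hypothesis $\la_n>2(1-e^{-\gamma_n})$ is consumed. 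You instead kill the drift entirely after $\tau_n=T_n-e^{-T_n^2}$ (note that this is the same cutoff time, since $\alpha_n^{1/T_n^{\ep}}=e^{-T_n^2}$), which makes $\bar S^n/S^n=\exp(-\int_{\tau_n}^t\mu^n_s\,ds)\in[e^{-\gamma_n},1]$ a one--sided deviation; no rescaling constant is needed, and you only use $\la_n>1-e^{-\gamma_n}$, so your construction in fact tolerates transaction costs half as large as the theorem demands --- a mild strengthening. Your identification of $\int_{\tau_n}^{T_n}\mu^n_s\,ds\leq\frac2{T_n}(e^{-T_n^2}+\alpha_n)^{1/2}=\gamma_n$ exactly reproduces the $\gamma_n$ of the statement, and your bicontiguity argument (deterministic Girsanov kernel, $L^2$--bound $\exp(\int_0^{\tau_n}(\mu^n_s)^2\,ds)\to e$ on both densities, then Cauchy--Schwarz or uniform integrability) is the same in substance as the paper's Lemma on bicontiguity, just with a slightly different and marginally smaller variance integral since you integrate only up to $\tau_n$. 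All steps check out: continuity of $\bar S^n$ at $\tau_n$, the martingale property of $\bar S^n$ under $\bar Q^n$ via $d\bar S^n_t/\bar S^n_t=d\bar W^n_t$, and uniqueness of the martingale measure from completeness in the Brownian filtration.
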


It is a direct consequence of Theorem~\ref{Ex1} that there does not exist any form of asymptotic arbitrage with transaction costs $\la_n>2(1-e^{-\gamma_n})$. Indeed, this follows as $(\bar Q^n)_{n\geq1}\vartriangleleft\vartriangleright (P)$ by an application of Theorems~\ref{1}, \ref{2}, \ref{strong}.

In particular, we get as an easy consequence

\begin{corollary}
Fix any $\la>0$. Then, for each $n$, there exists a $\la$--CPS
$(\hat S^n,\hat Q^n)$, where $\hat Q^n$ is the unique martingale measure for $\hat S^n$, and such  that $(\hat Q^n)_{n\geq1}\vartriangleleft\vartriangleright (P)$.
\end{corollary}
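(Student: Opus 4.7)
The plan is to deduce the corollary directly from Theorem~\ref{Ex1} by observing that, for any fixed $\la>0$, the $n$--dependent lower bound $2(1-e^{-\gamma_n})$ becomes automatically satisfied once $n$ is large enough. Since $\gamma_n\to 0$, we have $2(1-e^{-\gamma_n})\to 0$, so there exists $n_0$ such that $\la>2(1-e^{-\gamma_n})$ for every $n\geq n_0$. For all such indices I would simply apply Theorem~\ref{Ex1} with the constant sequence $\la_n\equiv\la$: this yields a $\la$--CPS $(\hat S^n,\hat Q^n)$ with $\hat Q^n$ the unique equivalent martingale measure for $\hat S^n$, and (as part of the conclusion of that theorem) the tail $(\hat Q^n)_{n\geq n_0}$ is bicontiguous with the constant sequence $(P)$.

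For the finitely many remaining indices $n<n_0$ I would define $\hat S^n:=S^n$ and $\hat Q^n:=Q^n$, where $Q^n$ is the unique equivalent martingale measure for $S^n$ already constructed in the example. The bid--ask inequality $(1-\la)S^n\leq \hat S^n\leq S^n$ is trivially satisfied, and $S^n$ is a $Q^n$--local martingale, so $(\hat Q^n,\hat S^n)$ is a $\la$--CPS and $\hat Q^n$ is the unique EMM of $\hat S^n$. Thus every $n\geq 1$ is covered.

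It remains to check that the full sequence $(\hat Q^n)_{n\geq 1}$ is bicontiguous with $(P)$. This follows at once from the fact that contiguity depends only on the asymptotic behaviour as $n\to\infty$: if $A^n\in\mathcal{F}^n$ satisfies $\hat Q^n(A^n)\to 0$, then the same holds along the tail $n\geq n_0$, whence $P(A^n)\to 0$ by the bicontiguity of the tail, and the converse direction is identical. Combined with Theorems~\ref{1}, \ref{2} and \ref{strong}, the existence of this bicontiguous sequence rules out every form of asymptotic arbitrage for fixed transaction cost $\la$.

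I do not foresee a substantive obstacle; the entire argument is a reduction to Theorem~\ref{Ex1} plus the elementary remark that bicontiguity ignores any finite initial segment. The only mildly delicate point is noting that the construction in Theorem~\ref{Ex1} is performed market by market, so nothing is lost by applying it only from index $n_0$ onwards and filling in the initial segment with the trivial choice $(Q^n,S^n)$.
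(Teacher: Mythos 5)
Your proposal is correct and is essentially identical to the paper's own proof: both choose $N$ (your $n_0$) with $\la>2(1-e^{-\gamma_n})$ for $n\geq N$, take the pair from Theorem~\ref{Ex1} on the tail, and set $(\hat S^n,\hat Q^n)=(S^n,Q^n)$ for $n<N$, with bicontiguity unaffected by the finite initial segment.
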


\begin{proof}
Choose $N$ such that $\la>2(1-e^{-\gamma_n})$ for all $n\geq N$. Define $(\hat S^n,\hat Q^n)=(\bar S^n, \bar Q^n)$ for $n\geq N$ and $(\hat S^n,\hat Q^n)= (S^n,Q^n)$ for $n<N$.
\end{proof}

In order to prove Theorem~\ref{Ex1} we define the following process $(\tilde{S}^n_t)_{0\leq t\leq T_n}$ which we will use to find the  $\lambda_n$--CPS on the market $n$.

\begin{equation}
\frac{d\tilde S^n_t}{\tilde S^n_t}=dW_t+\frac1{T_n}\tilde H^n_tdt,\label{tilde S^n}
\end{equation}

where
$$\tilde H^n_t=\begin{cases} \frac1{(T_n-t+\alpha_n)^{\frac12}} & \text{for $0\leq t\leq T_n-\alpha_n^{\frac1{T_n^{\ep}}}$}\\
\frac1{(T_n-t+\alpha_n)^{\frac12\left(1-\frac1{T_n^2}\right)}} & \text{for $T_n-\alpha_n^{\frac1{T_n^{\ep}}}< t\leq T_n$}
\end{cases}
$$

We will prove that the following holds.

\begin{lemma}\label{bidask}
Let $\la_n$ as in Theorem~\ref{Ex1}. There exists $N$ such that for each $n\geq N$ there exists $c(\lambda_n)$  such that
$$(1-\lambda_n)S^n_t\leq c(\lambda_n)\tilde S^n_t\leq S^n_t,\quad\quad \text{for all $0\leq t\leq T_n$.}$$
\end{lemma}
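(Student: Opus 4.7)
The plan is to show that the ratio $R^n_t := \tilde S^n_t/S^n_t$ is a deterministic function taking values in $[e^{-\gamma_n},1]$, and then to choose $c(\lambda_n)$ to fit this strip inside the bid--ask spread. Since $S^n$ and $\tilde S^n$ have identical local--martingale parts in their stochastic logarithm, It\^o's formula eliminates the Brownian term and gives
\begin{equation*}
\ln R^n_t \;=\; \frac{1}{T_n}\int_0^t\bigl[\tilde H^n_s - (T_n-s+\alpha_n)^{-1/2}\bigr]\,ds.
\end{equation*}
Setting $\beta_n := T_n - \alpha_n^{1/T_n^{\ep}}$, the integrand vanishes on $[0,\beta_n]$, so $R^n_t\equiv 1$ there. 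On $(\beta_n,T_n]$ we have $\tilde H^n_s = u^{-p_n}$ with $u:=T_n-s+\alpha_n$ and $p_n = \tfrac12(1-T_n^{-2})<\tfrac12$. Choosing $N$ so large that $A := \alpha_n^{1/T_n^{\ep}}+\alpha_n < 1$ for $n\ge N$, we get $u\le A<1$, hence $u^{-p_n}<u^{-1/2}$, so the integrand is negative: $R^n$ is strictly decreasing on $(\beta_n,T_n]$ and stays in $(0,1]$.

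The key quantitative step is to bound $-\ln R^n_t$ from above by $\gamma_n$, uniformly in $t$. Substituting $u=T_n-s+\alpha_n$ and discarding the negative $-u^{-p_n}$ term inside the integral,
\begin{equation*}
-\ln R^n_t \;=\; \frac{1}{T_n}\int_{T_n-t+\alpha_n}^{A}\bigl(u^{-1/2}-u^{-p_n}\bigr)\,du \;\le\; \frac{1}{T_n}\int_{\alpha_n}^{A} u^{-1/2}\,du \;\le\; \frac{2\sqrt{A}}{T_n}.
\end{equation*}
A direct algebraic manipulation identifies the right--hand side with $\gamma_n$:
\begin{equation*}
\gamma_n^{\,2} \;=\; \frac{4}{T_n^{\,2}}\,e^{-T_n^{2}}\bigl(1+e^{-T_n^{2}(T_n^{\ep}-1)}\bigr) \;=\; \frac{4}{T_n^{\,2}}\bigl(e^{-T_n^{2}}+e^{-T_n^{2+\ep}}\bigr) \;=\; \frac{4A}{T_n^{\,2}}.
\end{equation*}
Consequently $R^n_t \in [e^{-\gamma_n},1]$ for every $t\in[0,T_n]$ and every $n\ge N$.

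For the final step I would set $c(\lambda_n) := \sqrt{1-\lambda_n}$ (well defined since $\lambda_n<1$). The double inequality $(1-\lambda_n)S^n_t\le c(\lambda_n)\tilde S^n_t\le S^n_t$ is then equivalent to
\begin{equation*}
\sqrt{1-\lambda_n}\;\le\; R^n_t \;\le\; \frac{1}{\sqrt{1-\lambda_n}}.
\end{equation*}
The upper inequality is automatic from $R^n_t\le 1 \le 1/\sqrt{1-\lambda_n}$. The lower one reduces, via the bound $R^n_t\ge e^{-\gamma_n}$, to the purely numerical statement $1-\lambda_n\le e^{-2\gamma_n}$. This follows from the hypothesis $\lambda_n>2(1-e^{-\gamma_n})$ together with the elementary identity
\begin{equation*}
e^{-2\gamma_n} - \bigl(2e^{-\gamma_n}-1\bigr) \;=\; (1-e^{-\gamma_n})^{2} \;\ge\; 0,
\end{equation*}
which gives $1-\lambda_n < 2e^{-\gamma_n}-1 \le e^{-2\gamma_n}$. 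The main obstacle is really just the bookkeeping that matches the crude integral bound $2\sqrt{A}/T_n$ with the expression for $\gamma_n$ in the theorem; once that identity is noted, the rest is routine, and the inflation factor $2$ in the hypothesis on $\lambda_n$ is exactly what lets the elementary inequality $(1-e^{-\gamma_n})^2\ge 0$ close the argument.
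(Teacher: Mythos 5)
Your proof is correct and follows essentially the same route as the paper: the ratio $\tilde S^n_t/S^n_t$ is deterministic, equals $1$ up to time $T_n-\alpha_n^{1/T_n^{\ep}}$, and thereafter its logarithm deviates from $0$ by at most $2\sqrt{A}/T_n=\gamma_n$, after which one chooses $c(\lambda_n)$ so that the resulting band sits inside the bid--ask spread. The only differences are cosmetic: the paper evaluates the integrals exactly before bounding $|I_n(t)|$ by $\gamma_n$ and takes $c(\lambda_n)=(2-\lambda_n)/2$ via the substitution $\lambda_n'=\lambda_n/(2-\lambda_n)$, whereas you bound the integral by a cruder comparison and take $c(\lambda_n)=\sqrt{1-\lambda_n}$; both choices close the argument under the hypothesis $\lambda_n>2(1-e^{-\gamma_n})$.
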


Lemma~\ref{bidask} shows that  the process $c(\lambda_n)\tilde S^n$ lies in the bid--ask spread of $S^n$ for transaction costs $\lambda_n$.

Moreover, for each $n$, there exists a unique equivalent martingale measure $\tilde Q^n$ for $\tilde S^n$ given by

$$
\frac{d\tilde Q^n}{dP}=\exp\left(-\frac1{T_n}\int_0^{T_n} \tilde H^n_tdW_t-\frac1{2T^2_n}\int_{0}^{T_n}(\tilde H^n_t)^2dt.
\right)
$$

Indeed, Novikov's condition is satisfied as

\begin{align}
&\frac1{T_n^2}\int_{0}^{T_n}(\tilde H_t^n)^2dt = \frac1{T_n^2}\left(\int_0^{T_n-\alpha_n^{\frac1{T_n^{\ep}}}}\frac1{T_n-t+\alpha_n}dt +
\int_{T_n-\alpha_n^{\frac1{T_n^{\ep}}}}^{T_n}\frac1{(T_n-t+\alpha_n)^{\left(1-\frac1{T_n^2}\right)}}dt\right)\nonumber
\\
&=\frac1{T_n^2}\left(\ln(T_n+\alpha_n)-\ln(\alpha_n^{\frac1{T_n^{\ep}}}+\alpha_n) + T_n^2\left( (\alpha_n^{\frac1{T_n^{\ep}}}+\alpha_n)^{\frac1{T_n^2}}-(\alpha_n)^{\frac1{T_n^2}}\right)\right)\nonumber
\\
&=
\frac1{T_n^2}\left(\ln(T_n+\alpha_n)+T_n^2-\ln(1+e^{-T_n^2(T_n^{\ep}-1)})\right) + e^{-1}(1+e^{-T_n^2(T_n^{\ep}-1)})^{\frac1{T_n^2}}-e^{-T_n^{\ep}}
<\infty\nonumber
\end{align}

(In the calculation we have used that $\alpha_n^{\frac1{T_n^{\ep}}}+\alpha_n=e^{-T_n^2}(1+e^{-T_n^2(T_n^{\ep}-1)})$.)
\newline\newline
But, moreover, we have that Novikov's condition is satisified {\it "in the limit"}, as well, as

\begin{align}\label{Novikov_limit}
&\lim_{n\to\infty}\left(\frac1{T_n^2}\left(\ln(T_n+\alpha_n)+T_n^2-\ln(1+e^{-T_n^2(T_n^{\ep}-1)})\right) + e^{-1}(1+e^{-T_n^2(T_n^{\ep}-1)})^{\frac1{T_n^2}}-e^{-T_n^{\ep}}\right)\\
&=1+e^{-1}<\infty.\nonumber
\end{align}

This gives the following lemma.

\begin{lemma}\label{bicon}
We have that $(\tilde Q^n)_{n\geq1}\vartriangleleft\vartriangleright P$.
\end{lemma}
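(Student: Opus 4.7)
The plan is to exploit the explicit Gaussian structure of $\log(d\tilde Q^n/dP)$ together with \eqref{Novikov_limit} to show that the densities converge weakly under $P$ to a strictly positive random variable of mean one, and then to invoke the standard density criterion for mutual contiguity.

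Since $\tilde H^n_t$ is a deterministic (piecewise smooth) function of $t$, the Wiener integral $\xi_n := \int_0^{T_n}\tilde H^n_t\,dW_t$ is a centered Gaussian random variable under $P$ with variance $V_n := \int_0^{T_n}(\tilde H^n_t)^2\,dt$. Setting $a_n := V_n/T_n^2$ and $X_n := \xi_n/\sqrt{V_n}$, which is $N(0,1)$ under $P$, one has
\begin{equation*}
\log \frac{d\tilde Q^n}{dP} \;=\; -\sqrt{a_n}\,X_n \;-\; \frac{a_n}{2},
\end{equation*}
so $\log(d\tilde Q^n/dP)$ is $N(-a_n/2,\,a_n)$ distributed under $P$. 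By \eqref{Novikov_limit}, $a_n\to a_\infty := 1+e^{-1}\in(0,\infty)$, so Slutsky's theorem (and the continuous mapping theorem applied to $\exp$) gives that $Z^n := d\tilde Q^n/dP$ converges in distribution under $P$ to the log-normal limit $Z^\infty := \exp(-\sqrt{a_\infty}\,X - a_\infty/2)$ with $X \sim N(0,1)$. The limit satisfies $P(Z^\infty > 0)=1$ and $E_P[Z^\infty]=1=E_P[Z^n]$.

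To conclude, I invoke the classical Le Cam / Jacod--Shiryaev criterion: if a sequence of densities $Z^n$ converges in $P$-distribution to a strictly positive random variable of mean one, then $(\tilde Q^n)$ and $(P)$ are mutually contiguous. Indeed, $(\tilde Q^n)\con(P)$ amounts to uniform integrability of $\{Z^n\}$ under $P$, which follows from Vitali's theorem since the $Z^n$ are non-negative with $E_P[Z^n]=1=E_P[Z^\infty]$. The reverse direction $(P)\con(\tilde Q^n)$ amounts to $\sup_n P(Z^n<\delta)\to 0$ as $\delta\downarrow 0$; for large $n$ this comes from the weak-convergence bound $P(Z^n<\delta)\to P(Z^\infty\le\delta)\downarrow 0$ (every $\delta>0$ is a continuity point of the log-normal $Z^\infty$), and for the finitely many remaining indices it is automatic from $P\sim\tilde Q^n$.

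The only substantive ingredient is the limit \eqref{Novikov_limit}, which has already been verified by the direct asymptotic calculation preceding the lemma. The remainder is the standard translation ``weak convergence of densities to a strictly positive mean-one limit implies mutual contiguity,'' so no real obstacle arises beyond careful bookkeeping of the two implications.
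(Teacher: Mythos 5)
Your proof is correct, and it rests on the same analytic core as the paper's --- the Gaussianity of the Wiener integral $\int_0^{T_n}\tilde H^n_t\,dW_t$ and the limit \eqref{Novikov_limit} of $a_n=\frac1{T_n^2}\int_0^{T_n}(\tilde H^n_t)^2\,dt$ --- but the step from that computation to contiguity is genuinely different. The paper computes the second moments $E_P\big[(d\tilde Q^n/dP)^2\big]=e^{a_n}$ and $E_{\tilde Q^n}\big[(dP/d\tilde Q^n)^2\big]$ explicitly, obtains an $L^2$-bound on the densities in both directions, and deduces uniform integrability and hence each contiguity separately from Lemma~V.1.10 of Jacod--Shiryaev. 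You instead observe that $\log(d\tilde Q^n/dP)$ is $N(-a_n/2,\,a_n)$ under $P$, pass to the lognormal weak limit, and invoke Le Cam's first lemma (weak convergence of the densities to a strictly positive, mean-one limit yields mutual contiguity), supplying the uniform-integrability and no-mass-near-zero arguments yourself. Your route treats both directions in one stroke and needs only boundedness of $a_n$ (every subsequential lognormal limit is automatically strictly positive with mean one), whereas the paper's route is more elementary and quantitative, avoiding weak-convergence machinery altogether. Both are complete; the one step you should spell out is the appeal to ``Vitali'': what is actually used is the standard converse statement that convergence in distribution of nonnegative random variables together with convergence of their means implies uniform integrability, which is the only nontrivial ingredient in the direction $(\tilde Q^n)\vartriangleleft (P)$.
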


With Lemma~\ref{bicon} and Lemma~\ref{bidask} the proof of Theorem~\ref{Ex1} follows immediately.

\begin{proof}[Proof of Theorem~\ref{Ex1}]
For $\lambda_n$ as in the Theorem choose the  following $\lambda_n$--CPS $(\bar S^n,\bar Q^n)$ for all $n\geq1$. Take the $N$ and $c(\la_n)$ for $n\geq N$ of Lemma~\ref{bidask}. For $n<N$ let $\bar{S}^n=S^n$ and $\bar{Q}^n=Q^n$. For $n\geq N$ let $\bar{S}^n=c(\lambda_n)\tilde S^n$ and $\bar{Q}^n=\tilde Q^n$. This is the $\la_n$--CPS of Theorem~\ref{Ex1}, everything holds.
\end{proof}

\begin{proof}[Proof of Lemma~\ref{bicon}]

\begin{align}&E_P\left[\left(\frac{d\tilde Q^n}{dP}\right)^2\right]\nonumber\\
&= E_P\left[\exp\left(-\frac2{T_n}\int_0^{T_n} \tilde H^n_tdW_t-\frac1{T_n^2}\int_{0}^{T_n}(\tilde H^n_t)^2dt
\right)\right]\nonumber\\
&=E_P\left[\exp\left(-\frac2{T_n}\int_0^{T_n} \tilde H^n_tdW_t\right)\right]\exp\left(-\frac1{T_n^2}\int_{0}^{T_n}(\tilde H^n_t)^2dt\right)\nonumber\\
&=\exp\left(\frac2{T_n^2}\int_{0}^{T_n}(\tilde H^n_t)^2dt  -\frac1{T_n^2}\int_{0}^{T_n}(\tilde H^n_t)^2dt\right)=\exp\left(\frac1{T_n^2}\int_{0}^{T_n}(\tilde H^n_t)^2dt\right)\nonumber\\
&=\exp\left( \frac1{T_n^2}\left(\ln(T_n+\alpha_n)+T_n^2-\ln(1+e^{-T_n^2(T_n^{\ep}-1)})\right) + e^{-1}(1+e^{-T_n^2(T_n^{\ep}-1)})^{\frac1{T_n^2}}-e^{-T_n^{\ep}}\right)\nonumber\\
&:=\zeta_n.\nonumber
\end{align}

\noindent(In the above we used that $\frac1{T_n}\int_0^{T_n}\tilde{H}^n_tdW_t$ has a normal distribution with mean 0 and variance
$\frac1{T_n^2}\int_0^{T_n}(\tilde{H}^n_t)^2dt$. And  for $X\sim N(0,\sigma^2)$ we have that $E[e^{uX}]=e^{\frac{u^2\sigma^2}2}$.)
\newline\newline
As
$\lim_{n\to\infty}\zeta_n=\exp(1+e^{-1})$
we have that
$E_P\left[\left(\frac{d\tilde Q^n}{dP}\right)^2\right]\leq e^2$,
for $n$ large enough. This implies that $\left(\frac{d\tilde Q^n}{dP}\right)_{n\geq1}$ is uniformly integrable with respect to $P$. Therefore $(\tilde Q^n)\vartriangleleft P$, by Lemma~V.1.10 of \cite{Jac:Shir}. The same argument (using the Brownian motion $W^{\tilde Q^n}=W_t+\frac1{T_n}\int_0^{T_n}\tilde{H}^n_tdt$) shows that $E_{\tilde Q^n}\left[\left(\frac{dP}{d\tilde Q^n}\right)^2\right]\leq e^2$ for $n$ large enough which gives
that
$\left(\frac{dP}{d\tilde Q^n}\right| \tilde Q^n)_{n\geq1}$ is uniformly integrable and therefore $P\vartriangleleft (\tilde Q^n)$, again by  Lemma~V.1.10 of \cite{Jac:Shir}.
\end{proof}

\begin{proof}[Proof of Lemma~\ref{bidask}]
Fix  $2(1-e^{-\gamma_n})<\la_n<1$. Define $\la_n'=\frac{\la_n}{2-\la_n}$, then $\frac{1-\la_n'}{1+\la_n'}=1-\la_n$. For $t\leq T_n-\alpha_n^{\frac1{T_n^{\ep}}}$ we have that $S^n_t=\tilde S^n_t$ and for $T_n-\alpha_n^{\frac1{T_n^{\ep}}}<t\leq T_n$

\begin{equation}\tilde S^n_t=S^n_t\exp\left(\frac1{T_n}\int_{T_n-\alpha_n^{\frac1{T_n^{\ep}}}}^t\frac1{(T_n-u+\alpha_n)^{\frac12(1-\frac1{T_n^2})}}du-
\frac1{T_n}\int_{T_n-\alpha_n^{\frac1{T_n^{\ep}}}}^t\frac1{(T_n-u+\alpha_n)^{\frac12}}du
\right)\nonumber\end{equation}

We will first show that there exists $N$ such that for all $n\geq N$, all $T_n-\alpha_n^{\frac1{T_n^{\ep}}}\leq t\leq T_n$,

\begin{equation}\label{la'}
(1-\la_n')S^n_t\leq \tilde S^n_t\leq (1+\la_n')S^n_t.
\end{equation}

So we have to show that

$$
-\delta_n\leq \frac1{T_n}\int_{T_n-\alpha_n^{\frac1{T_n^{\ep}}}}^t\frac1{(T_n-u+\alpha_n)^{\frac12(1-\frac1{T_n^2})}}du-\frac1{T_n}
\int_{T_n-\alpha_n^{\frac1{T_n^{\ep}}}}^t\frac1{(T_n-u+\alpha_n)^{\frac12}}du\leq \ep_n,
$$

where $\delta_n:=-\ln(1-\la_n')$ and $\ep_n:=\ln(1+\la_n')$. Note that
\begin{equation}\label{rate}
\min(\delta_n,\ep_n)=\ep_n=\ln\big(\frac{2}{2-\la_n}\big).
\end{equation}
By calculating the integrals and rearranging this we see that it is sufficient to show that

\begin{equation}\label{lnlala}
-\ep_n\leq I_n(t) \leq \ep_n,
\end{equation}

for all $T_n-\alpha_n^{\frac1{T_n^{\ep}}}\leq t\leq T_n$ where

\begin{gather} I_n(t)=\nonumber\\
-\frac2{T_n}(\alpha_n^{\frac1{T_n^{\ep}}}+\alpha_n)^{\frac12}\left( 1-\frac{(\alpha_n^{\frac1{T_n^{\ep}}}+\alpha_n)^{\frac1{2T_n^2}}}{1+\frac1{T_n^2}} \right)
+\frac2{T_n}(T_n-t+\alpha_n)^{\frac12}\left( 1-\frac{(T_n-t+\alpha_n)^{\frac1{2T_n^2}}}{1+\frac1{T_n^2}} \right)
\nonumber
\end{gather}

We can choose $N$ such that for $n\geq N$ we have that $1-\frac{(\alpha_n^{\frac1{T_n^{\ep}}}+\alpha_n)^{\frac1{2T_n^2}}}{1+\frac1{T_n}}\geq 0$.
Indeed, this holds as $(\alpha_n^{\frac1{T_n^{\ep}}}+\alpha_n)^{\frac1{2T_n^2}}=e^{-\frac12}
(1+e^{-T_n^2(T_n^{\ep}-1)})^{\frac1{2T_n^2}}\to e^{-\frac12}<0.6$ for $n\to\infty$.
Therefore, $0\leq 1-\frac{(T_n-t+\alpha_n)^{\frac1{2T_n^2}}}{1+\frac1{T_n^2}}\leq 1$ for all $T_n-\alpha_n^{\frac1{T_n^{\ep}}}\leq t\leq T_n$. This implies that, for all $n\geq N$ and all
$T_n-\alpha_n^{\frac1{T_n^{\ep}}}\leq t\leq T_n$,

\begin{equation}\label{bound}
|I_n(t)|\leq \frac2{T_n}(\alpha_n^{\frac1{T_n^{\ep}}}+\alpha_n)^{\frac12}=\frac2{T_n}e^{-\frac{T_n^2}2}(1+e^{-T_n^2(T_n^{\ep}-1)})
^{\frac12}=\gamma_n.
\end{equation}
As for $\la_n>2(1-e^{-\gamma_n})$, we have that
$$\ep_n=\ln\big(\frac{2}{2-\la_n}\big)>\gamma_n$$
we get
by (\ref{bound}) that $|I_n(t)|< \ep_n$ for all $n\geq N$ and all
$T_n-\alpha_n^{\frac1{T_n^{\ep}}}\leq t\leq T_n$ which implies (\ref{lnlala}) and therefore (\ref{la'}).

Clearly we have that

$$\frac{1-\la_n'}{1+\la_n'}S^n_t\leq \frac1{1+\la_n'}\tilde S^n_t\leq \frac{1+\la_n'}{1+\la_n'}S^n_t.$$

So, if we define $c(\la_n)=\frac1{1+\la_n'}=\frac{2-\la_n}{2}$ and recall that $\frac{1-\la_n'}{1+\la_n'}= 1-\la_n$ we get

$$(1-\la_n)S^n_t\leq c(\la_n) \tilde S^n_t\leq S^n_t.$$

And for $0\leq t\leq T_n-\alpha_n^{\frac1{T_n^{\ep}}}$, trivially,

$$(1-\la_n)S^n_t=\frac{1-\la_n'}{1+\la_n'}S^n_t\leq \frac{1}{1+\la_n'}S^n_t=c(\la_n)\tilde S^n_t\leq S^n_t.$$
\end{proof}

\begin{remark}
It would be interesting to determine the fastest possible rate of convergence of $\la_n\to0$ such that there does not exists an asymptotic arbitrage, and such that whenever $\la_n\to0$ faster than that there exists a strong asymptotic arbitrage. We leave this open for further research.
\end{remark}

\section{Appendix }\label{app}
\subsection{Some results for a market with small transaction costs:}
We consider one market $n$ and omit the superscript $n$. Each of the following results will hold on each market $n$.
The following theorem can be found in \cite{G:R:S:2008} (and in \cite{Sch}).

\begin{theorem}\label{NA-L} The following two statements are equivalent.
\begin{enumerate}
\item
$\mathcal{M}(\la)\ne\emptyset$ for each $\la>0$.
\item NA($\la$) is satisfied, for each $\la>0$.
\end{enumerate}
\end{theorem}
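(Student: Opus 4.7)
The implication $(1)\Rightarrow(2)$ is the routine soft direction. Fix $\la>0$ and $(\widetilde Q,\widetilde S)\in\mathcal{M}(\la)$, and let $\phi=(\phi^0,\phi^1)$ be admissible self--financing with zero endowment satisfying $V^{\la,S}_T(\phi)\ge 0$. A case distinction on the sign of $\phi^1_t$ using $(1-\la)S_t\le\widetilde S_t\le S_t$ shows that the frictionless mark--to--market value $V^{0,\widetilde S}_t(\phi):=\phi^0_t+\phi^1_t\widetilde S_t$ dominates $V^{\la,S}_t(\phi)$ pointwise (this is the content of Lemma~\ref{blum}). Integration by parts for the continuous local martingale $\widetilde S$ against the finite--variation process $\phi^1$, combined with the self--financing constraints in \eqref{selff}, yields
\begin{equation*}
V^{0,\widetilde S}_t(\phi) \;\le\; \int_{0-}^{t}\phi^1_s\,d\widetilde S_s + \int_{0-}^{t}(\widetilde S_s-S_s)\,d\phi^{1,\uparrow}_s + \int_{0-}^{t}((1-\la)S_s-\widetilde S_s)\,d\phi^{1,\downarrow}_s \;\le\; \int_{0-}^{t}\phi^1_s\,d\widetilde S_s,
\end{equation*}
since the last two integrands are non--positive. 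Thus $\int\phi^1\,d\widetilde S$ is a $\widetilde Q$--local martingale which, being bounded below by admissibility, is a genuine $\widetilde Q$--supermartingale. Therefore $E_{\widetilde Q}[V^{0,\widetilde S}_T(\phi)]\le 0$, while $V^{0,\widetilde S}_T(\phi)\ge V^{\la,S}_T(\phi)\ge 0$ with $\widetilde Q\sim P$; this forces $V^{\la,S}_T(\phi)=0$ a.s., which is NA$(\la)$.

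The converse $(2)\Rightarrow(1)$ is the substantive content. Fix $\la>0$ and choose $\la'\in(0,\la)$; by hypothesis NA$(\la')$ holds. The plan is to obtain a $\la$--CPS by Hahn--Banach separation applied to the convex cone
\begin{equation*}
\mathcal{A}_{\la'} \;=\; \{\, V^{\la',S}_T(\phi) : \phi \text{ is $\la'$--admissible self--financing with zero endowment}\,\} \,-\, L^0_+.
\end{equation*}
The no--arbitrage property NA$(\la')$ translates into $\mathcal{A}_{\la'}\cap L^0_+=\{0\}$. The critical step is to prove that $\mathcal{A}_{\la'}\cap L^\infty$ is weak--$*$ closed in $L^\infty$. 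Once closedness is available, a Kreps--Yan / Yan--type exhaustion argument produces a probability $\widetilde Q\sim P$ with bounded density such that $E_{\widetilde Q}[f]\le 0$ for all $f\in\mathcal{A}_{\la'}\cap L^\infty$. One then constructs the shadow price $\widetilde S$ as a $\widetilde Q$--martingale taking values in the bid--ask corridor $[(1-\la')S,S]\subset[(1-\la)S,S]$ (for instance as the $\widetilde Q$--conditional expectation of an extremal element of the corridor at maturity); the gap $\la-\la'>0$ absorbs any approximation slack and guarantees $(1-\la)S_t\le\widetilde S_t\le S_t$.

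The main obstacle is precisely the weak--$*$ closedness of $\mathcal{A}_{\la'}\cap L^\infty$. Under transaction costs the value process is non--linear in the strategy because of the $(\phi^1)^{\pm}$ split in \eqref{valuepr}, which precludes a direct appeal to the frictionless Delbaen--Schachermayer closedness theorem. The route of Guasoni--R\'asonyi--Schachermayer exploits the continuity of $S$ (the \emph{stickiness} property: continuous paths can be perturbed within any bid--ask corridor without creating arbitrage) together with a Koml\'os--type convex--combination extraction on a sequence of $\la'$--admissible strategies whose terminal payoffs converge weak--$*$. The approximation error introduced by passing to the limit is exactly what is paid for by the slack $\la-\la'$, so that the limiting strategy is only $\la$--admissible rather than $\la'$--admissible; this is why Assumption~\ref{NA} is imposed for \emph{every} positive level of transaction costs. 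With closedness in hand, the separation and the construction of $(\widetilde Q,\widetilde S)$ proceed along classical lines.
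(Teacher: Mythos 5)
Your implication $(1)\Rightarrow(2)$ is correct and is essentially the combination of Lemma~\ref{blum} and Lemma~\ref{grr} from the Appendix. Note, however, that the paper itself does not prove Theorem~\ref{NA-L} at all: it is quoted from \cite{G:R:S:2008} (and \cite{Sch}), so the only thing to assess is whether your sketch of the hard direction would actually go through.

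It would not, as written. For $(2)\Rightarrow(1)$ you reduce everything to two steps that you flag as ``critical'' but do not carry out, and neither can be filled in along the lines you indicate. First, the weak--$*$ closedness of $\mathcal{A}_{\la'}\cap L^\infty$: here $S$ is only assumed continuous and strictly positive, not a semimartingale, so the Delbaen--Schachermayer closedness machinery is unavailable; ``stickiness'' is a property used in the literature to construct consistent price systems directly, not to prove closedness of a cone of attainable claims, and you state no Koml\'os-type compactness for sequences of finite-variation strategies. Second, and more decisively, even granting a separating measure $\widetilde Q\sim P$ with $E_{\widetilde Q}[f]\le 0$ on $\mathcal{A}_{\la'}\cap L^\infty$, your construction of the shadow price as the $\widetilde Q$-conditional expectation of an extremal element of the corridor at maturity fails: $E_{\widetilde Q}[S_T\mid\mathcal{F}_t]$ is a $\widetilde Q$-martingale, but nothing forces it to remain in $[(1-\la)S_t,S_t]$ at intermediate times $t$, and producing a martingale that lives in the bid--ask corridor at \emph{all} times is the entire content of the theorem; the slack $\la-\la'$ does not repair this. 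The actual proof in \cite{G:R:S:2008} avoids Kreps--Yan separation altogether: from NA($\la$) for all $\la>0$ one deduces the absence of (obvious, immediate) arbitrage opportunities, and the consistent price system is then built by hand via a sequence of stopping times at which $S$ has moved by a fixed multiplicative factor, together with a martingale selection on the resulting discrete skeleton. So your sketch neither closes its own gaps nor reproduces the argument of the cited source; as a proof it is incomplete at exactly the points where the theorem is nontrivial.
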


\begin{lemma}\label{blum} Let $\phi=(\phi^0,\phi^1)$ be a self--financing trading strategy.  Then
$$V_t^{0,\widetilde{S}}(\phi)=\phi_t^0+\phi_t^1\widetilde{S}_t\geq V^{\lambda, S}_t(\phi)\quad\quad\text{for all $0\leq t\leq T$},$$
for any  $\widetilde{S}$ with $(1-\la)S_t\leq \widetilde{S}_t\leq S_t$ a.s., $0\leq t\leq T$.
\end{lemma}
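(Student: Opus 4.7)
The plan is to establish the inequality pointwise in $t$, as a purely algebraic consequence of the bid--ask constraint on $\widetilde{S}$. Fix $t \in [0,T]$ and an $\om$ on which the bid--ask inequalities hold. Using the definitions of the two value processes together with the decomposition $\phi^1_t = (\phi^1_t)^+ - (\phi^1_t)^-$, one computes
\begin{align}
V_t^{0,\widetilde{S}}(\phi) - V^{\la,S}_t(\phi)
&= \phi^1_t\widetilde{S}_t - (\phi^1_t)^+(1-\la)S_t + (\phi^1_t)^-S_t \nonumber\\
&= (\phi^1_t)^+\bigl[\widetilde{S}_t - (1-\la)S_t\bigr] + (\phi^1_t)^-\bigl[S_t - \widetilde{S}_t\bigr]. \nonumber
\end{align}
Both bracketed expressions are non--negative by the hypothesis $(1-\la)S_t \leq \widetilde{S}_t \leq S_t$, and $(\phi^1_t)^\pm \geq 0$ by construction, so the right--hand side is a.s.\ non--negative, which is exactly the desired inequality.

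The result is really a one--line observation, and there is no genuine obstacle: the self--financing condition and the dynamics of $\phi^0, \phi^1, S, \widetilde{S}$ play no role, since the inequality is evaluated separately at each time $t$. The lemma is simply the precise formalisation of the heuristic that, because a $\la$--CPS $\widetilde{S}$ lies in the bid--ask spread of $S$, marking the portfolio to market at $\widetilde{S}$ with zero transaction costs is always at least as generous as liquidating it at the bid--ask prices of $S$ under transaction costs $\la$. The only point to watch is the sign convention in $V^{\la,S}_t$: the factor $(1-\la)$ multiplies only the positive part (liquidating a long position yields the bid $(1-\la)S_t$), whereas closing a short requires paying the ask $S_t$; the lower and upper bid--ask bounds on $\widetilde{S}$ are exactly what one needs to dominate these two contributions respectively.
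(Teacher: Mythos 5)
Your computation is correct and is exactly the argument the paper has in mind — the paper's own proof consists of the single line ``Immediate by the definition of $\widetilde{S}$'', and your display with the decomposition $\phi^1_t=(\phi^1_t)^+-(\phi^1_t)^-$ is precisely the spelled-out version of that observation. Nothing further is needed.
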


\begin{proof}
Immediate by the definition of $\widetilde{S}$.
\end{proof}

The following lemma is proved in an unpublished work of W.~Schachermayer \cite{Sch}. For the convenience of the reader we present the easy proof.

\begin{lemma}\label{grr}
Suppose that $\mathcal{M}(\la)\neq\emptyset$.  Let $\phi=(\phi^0,\phi^1)$ be a self--financing admissible trading strategy for $S$ with transaction costs $\la$. Let $(\widetilde{S},Q)$ be a $\la$--consistent  pricing system. Then the process
$$V_t^{0,\widetilde{S}}(\phi)=\phi_t^0+\phi_t^1\widetilde{S}_t,\quad,0\leq t\leq T$$
is a $Q$--local martingale which is uniformly bounded from below and therefore a $Q$--supermartingale.
\end{lemma}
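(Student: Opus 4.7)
The plan is to use integration by parts on the product $\phi^1_t \widetilde{S}_t$ to write $V^{0,\widetilde{S}}(\phi)$ as a stochastic integral with respect to the $Q$-local martingale $\widetilde{S}$ plus a finite-variation piece driven by $d\phi^0$ and $d\phi^1$. The self-financing condition \eqref{selff} combined with the bid--ask sandwich $(1-\la)S \leq \widetilde{S} \leq S$ will make that second piece a decreasing process, so that $V^{0,\widetilde{S}}(\phi)$ becomes a local martingale minus an increasing process; the admissibility bound will then upgrade this to a genuine $Q$-supermartingale.

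First, by the Brownian filtration assumption of Section~\ref{model}, every $Q$-local martingale (in particular $\widetilde{S}$) has continuous paths, so $[\phi^1,\widetilde{S}]=0$ since $\phi^1$ has finite variation. Integration by parts over $[0-,t]$, using $\phi^0_{0-}=\phi^1_{0-}=0$ so that any jump at time $0$ is absorbed by the $d\phi$-integrals, yields
$$V^{0,\widetilde{S}}_t(\phi)=\int_{0-}^{t} \phi^1_{s-}\,d\widetilde{S}_s \;+\; \int_{0-}^{t}\bigl(d\phi^0_s + \widetilde{S}_s\,d\phi^1_s\bigr).$$
The first integral is a $Q$-local martingale because $\phi^1_-$ is left-continuous and adapted (hence predictable) and locally bounded by the total variation of $\phi^1$.

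Next, using the Jordan decompositions $\phi^0=\phi^{0,\uparrow}-\phi^{0,\downarrow}$ and $\phi^1=\phi^{1,\uparrow}-\phi^{1,\downarrow}$, together with \eqref{selff} and the bid--ask bounds, I obtain
$$d\phi^{0,\uparrow}\leq(1-\la)S_s\,d\phi^{1,\downarrow}\leq \widetilde{S}_s\,d\phi^{1,\downarrow}, \qquad d\phi^{0,\downarrow}\geq S_s\,d\phi^{1,\uparrow}\geq \widetilde{S}_s\,d\phi^{1,\uparrow},$$
so $d\phi^0_s + \widetilde{S}_s\,d\phi^1_s = (d\phi^{0,\uparrow} - \widetilde{S}_s\,d\phi^{1,\downarrow}) - (d\phi^{0,\downarrow} - \widetilde{S}_s\,d\phi^{1,\uparrow}) \leq 0$. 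Consequently the second integral above is a decreasing process $-D_t$ with $D$ nondecreasing and $D_{0-}=0$.

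Finally, for the supermartingale conclusion I will invoke admissibility: there exists $M>0$ with $V^{\la,S}_t(\phi)\geq -M$, and by Lemma~\ref{blum} this sharpens to $V^{0,\widetilde{S}}_t(\phi)\geq -M$ for every $t$. Writing $V^{0,\widetilde{S}}_t(\phi)=N_t-D_t$ with $N$ denoting the stochastic integral, I obtain $N_t=V^{0,\widetilde{S}}_t(\phi)+D_t\geq -M$, so $N+M$ is a nonnegative $Q$-local martingale and hence a $Q$-supermartingale by Fatou's lemma. Subtracting the increasing process $D$ preserves the supermartingale property, which yields the claim. The main subtlety will be the passage from local-supermartingale structure to true supermartingale, which I handle precisely via this shift-and-Fatou argument on the stochastic integral rather than directly on $V^{0,\widetilde{S}}(\phi)$; the other small technicality is bookkeeping the initial jump at time $0$, which is absorbed into $D$ once the self-financing inequality is applied there with the correct sign.
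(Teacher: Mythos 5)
Your proposal is correct and follows essentially the same route as the paper: apply the product rule (valid since $\phi^1$ has finite variation and $\widetilde{S}$ is continuous), use the self--financing inequalities together with the bid--ask sandwich to show that the $d\phi^0+\widetilde{S}\,d\phi^1$ part is decreasing, and then use admissibility plus Lemma~\ref{blum} to bound the stochastic integral from below and conclude the supermartingale property. Your treatment of the initial jump and the shift-and-Fatou step are just slightly more explicit versions of what the paper does.
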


\begin{proof}
$(\phi^1_t)_{0\leq t\leq T}$  is a finite--variation process and $\widetilde{S}$ is continuous, therefore we can apply the product rule and get
\begin{equation}\label{Vtilde}
dV_t^{0,\widetilde{S}}(\phi)=d\phi_t^0+\widetilde{S}_td\phi_t^1+\phi_t^1d\widetilde{S}_t.
\end{equation}
By (\ref{selff}) we have that
\begin{equation}\label{selffagain}
d\phi_t^{0}\leq (1-\la)S_td\phi_t^{1,\downarrow}- S_td\phi_t^{1,\uparrow}, \quad 0\leq t\leq T.
\end{equation}
As $\widetilde{S}_t$ takes its values in the bid--ask--spread $[(1-\la)S_t,S_t]$ we see from (\ref{selffagain}) that the process
$(\int_0^t(d\phi_t^0+\widetilde{S}_td\phi_t^1)_{0\leq t\leq T}$ is decreasing. By the admissibility, see (\ref{valuepr}), and by Lemma~\ref{blum} the process $\widetilde{V}$ is uniformly bounded from below. Therefore the stochastic integral $\int_0^t\phi_t^1d\widetilde{S}_t$ is uniformly bounded from below as well and hence a  $Q$--local martingale which is uniformly bounded from below, hence a $Q$--supermartingale. Therefore  $\widetilde{V}$ is a $Q$--supermartingale as well.
\end{proof}

There are a number of superreplication theorems in the presence of transaction costs. For a general version that implies the theorem below, see, e.g., \cite{C:S}. In
the present form it was communicated to us in a personal communication with W.~Schachermayer \cite{Sch}.

\begin{theorem}[Superreplication Theorem]\label{super}
Let $f$ be a random variable with $f\geq -M$ for some $M>0$. Then the following are equivalent.
\begin{enumerate}
\item There is a self--financing admissible trading strategy $\phi=(\phi^0,\phi^1)$ starting at $(\phi^0_{0-},\phi^1_{0-})=(0,0)$ such that the terminal value is equal to $\phi_T=(f,0)$. That is
    $$V^{\la,S}_T(\phi)=f.$$
\item $E_Q[f]\leq 0$ for every  $Q\in\mathcal{M}(\la)$. \end{enumerate}
\end{theorem}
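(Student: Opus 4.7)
The implication (1)$\Rightarrow$(2) is immediate from Lemma~\ref{grr}. My plan is to fix any $Q\in\mathcal{M}(\la)$ with associated CPS process $\widetilde{S}$; Lemma~\ref{grr} then says that $V_t^{0,\widetilde{S}}(\phi)=\phi_t^0+\phi_t^1\widetilde{S}_t$ is a $Q$--supermartingale uniformly bounded from below, starting at $V_{0-}^{0,\widetilde{S}}(\phi)=0$ (since $\phi_{0-}=(0,0)$) and ending at $V_T^{0,\widetilde{S}}(\phi)=\phi_T^0+\phi_T^1\widetilde{S}_T=f$ (since $\phi_T=(f,0)$). The supermartingale inequality between $0-$ and $T$ then yields $E_Q[f]\leq 0$.

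For the non-trivial direction (2)$\Rightarrow$(1), I would set up a Hahn--Banach separation. Introduce the convex cone of bounded claims super-replicable from zero endowment with free disposal,
\[
\mathcal{A}=\bigl\{V^{\la,S}_T(\phi)-h:\phi\text{ self--financing admissible, }\phi^0_{0-}=\phi^1_{0-}=0,\; h\in L^0_+\bigr\}\cap L^\infty.
\]
If $f\in\mathcal{A}$, then (1) follows: the inequalities in \eqref{selff} already permit ``wasting'' wealth during trading, so the $L^0_+$--disposal term can be absorbed into a slightly modified strategy that exactly replicates $f$ at time $T$ with $\phi^1_T=0$ (liquidating any residual stock at time $T$).

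Suppose, for a contradiction, that $f\notin\mathcal{A}$. The critical step is to show that $\mathcal{A}$ is weak--$*$ closed in $L^\infty$. Assuming this, a Kreps--Yan-type separation produces a strictly positive $g\in L^1(P)$ with $E[gf]>0$ and $\sup_{\xi\in\mathcal{A}}E[g\xi]\leq 0$. Normalize to a probability $Q\sim P$ via $dQ/dP\propto g$. Using as test elements the strategies ``buy one unit at a stopping time $\si$, liquidate at $\tau>\si$'' (which lie in $\mathcal{A}$ and whose value is $(1-\la)S_\tau-S_\si$), one obtains $E_Q[(1-\la)S_\tau - S_\si\mid \mathcal{F}_\si]\leq 0$ for all $\si\leq\tau$. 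A standard sandwich construction then yields a $Q$--local martingale $\widetilde{S}$ with $(1-\la)S_t\leq\widetilde{S}_t\leq S_t$, placing $Q$ in $\mathcal{M}(\la)$ and contradicting $E_Q[f]\leq 0$.

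The main obstacle is the weak--$*$ closedness of $\mathcal{A}$. In the frictionless case this is the Delbaen--Schachermayer argument, but here one needs a Komlós-type extraction adapted to the bid--ask cone, together with boundedness in probability of the terminal liquidation values. I would follow the approach of Campi--Schachermayer~\cite{C:S}, invoking Assumption~\ref{NA} (equivalently Theorem~\ref{NA-L}, giving $\mathcal{M}(\la')\neq\emptyset$ for all $\la'>0$) to obtain the compactness needed to control the limiting procedure and absorb the residual stock positions. Once closedness is in hand, both the separation and the construction of $\widetilde{S}$ proceed along standard lines.
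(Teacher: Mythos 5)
The paper does not actually prove Theorem~\ref{super}: it is quoted from the literature, with \cite{C:S} cited for a general version that implies it and the present two--asset formulation attributed to an unpublished communication of Schachermayer. So there is no in--paper argument to compare yours against; what can be assessed is whether your sketch would stand on its own.

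Your direction (1)$\Rightarrow$(2) is correct and is exactly what Lemma~\ref{grr} is built for; the only point to be careful about is the possible trade at time $0$, i.e.\ one should note that the self--financing inequalities force $V^{0,\widetilde S}_0(\phi)\leq V^{0,\widetilde S}_{0-}(\phi)=0$ before invoking the supermartingale inequality on $[0,T]$. For (2)$\Rightarrow$(1) your architecture is the standard one, but as written it has concrete gaps. First, $f$ is only assumed bounded from below, not bounded, so $f$ need not lie in $L^\infty$ and the Kreps--Yan separation in the $(L^\infty,L^1)$ duality cannot be applied to $f$ directly; you would need to superreplicate the truncations $f\wedge n$ (note $E_Q[f\wedge n]\leq E_Q[f]\leq 0$) with strategies admissible with the \emph{same} bound $-M$ and then pass to the limit, which requires a further compactness argument of Campi--Schachermayer type. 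Second, the test strategies ``buy one unit at $\sigma$, liquidate at $\tau$'' are not admissible when $S$ is unbounded (the running liquidation value is only bounded below by $-S_\sigma$), so they must be localized before they can be fed into the separating functional; and turning the resulting bid--ask supermartingale inequalities into a genuine local--martingale CPS is precisely the nontrivial content of \cite{G:R:S:2008}, not a routine sandwich. Finally, the weak--$*$ closedness of $\mathcal{A}$, which you correctly identify as the crux, is the entire substance of \cite{C:S}. In short, your proposal is a faithful roadmap of the proof that the cited references carry out, but as a self--contained argument it leaves the genuinely hard steps as black boxes and, in the unbounded case, the duality step as stated does not apply to $f$ itself.
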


The following proposition was communicated to us by W.~Schachermayer \cite{Sch} and will appear in a future work of him together with an example that shows that the assumption that $\mathcal{M}(\la')\ne\emptyset$ for each $\la'>0$ is indeed necessary. This example shows that even if
one excludes arbitrage with transaction costs $\la$, it is possible to construct a value
process which  is bounded from below by $-1$ at terminal time, but it is strictly smaller than $-1$ with positive
probability at a prior time point.

\begin{proposition}\label{schach}
Assume that $\mathcal{M}(\la')\ne\emptyset$ for each $\la'>0$. Assume that $\phi=(\phi^0,\phi^1)$  is
a self--financing admissible  trading  strategy under transaction costs $\la$ for $S$. Suppose that there is $M>0$  such that
$$V^{\la,S}_T(\phi)\geq-M.$$
 Then we have that
$$V^{\la,S}_{\tau}(\phi)\geq -M,\quad\quad\text{a.s.}$$
for each stopping time $0\leq \tau\leq T$ (in particular for deterministic time points $t\leq T$).
\end{proposition}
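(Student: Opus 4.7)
My plan is to argue by contradiction. Suppose, for some $\epsilon>0$ and stopping time $\tau\leq T$, that the event $A:=\{V^{\la,S}_\tau(\phi)<-M-\epsilon\}\in\mathcal F_\tau$ has positive probability.

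The first step is to apply Lemma~\ref{grr}. Under the hypothesis $\mathcal M(\la')\neq\emptyset$ for every $\la'>0$ we have in particular $\mathcal M(\la)\neq\emptyset$, so fix any $(Q,\widetilde S)\in\mathcal M(\la)$. The lemma states that $V^{0,\widetilde S}(\phi)=\phi^0+\phi^1\widetilde S$ is a $Q$-supermartingale, uniformly bounded below. By Lemma~\ref{blum}, $V^{0,\widetilde S}_T\geq V^{\la,S}_T\geq -M$, so optional sampling yields $V^{0,\widetilde S}_\tau\geq E_Q[V^{0,\widetilde S}_T\mid\mathcal F_\tau]\geq -M$ $Q$-a.s., hence $P$-a.s.\ since $Q\sim P$. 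Equivalently,
\[
\phi^0_\tau+\phi^1_\tau\widetilde S_\tau\geq -M\qquad P\text{-a.s., for every }(Q,\widetilde S)\in\mathcal M(\la).
\]

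The second step is to upgrade this ``frictionless'' bound to the liquidation bound $V^{\la,S}_\tau\geq -M$. I would split $A$ according to the sign of $\phi^1_\tau$. On $A\cap\{\phi^1_\tau=0\}$ the two value processes coincide, and the previous step already produces a contradiction, so $P(A\cap\{\phi^1_\tau=0\})=0$. On $A\cap\{\phi^1_\tau>0\}$ one has
\[
V^{0,\widetilde S}_\tau-V^{\la,S}_\tau=\phi^1_\tau\bigl(\widetilde S_\tau-(1-\la)S_\tau\bigr),
\]
which is small precisely when $\widetilde S_\tau$ sits near the lower boundary $(1-\la)S_\tau$ of the bid--ask spread; symmetrically, on $A\cap\{\phi^1_\tau<0\}$ the gap is small when $\widetilde S_\tau$ is near the upper boundary $S_\tau$. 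After truncating to the $\mathcal F_\tau$-measurable set $\{|\phi^1_\tau|\leq K,\,S_\tau\leq L\}\cap A$, which still has positive probability for suitably large $K,L$, a $\la$-CPS with $\widetilde S_\tau$ within $\eta<\epsilon/(KL)$ of the appropriate boundary, combined with step one, gives $V^{\la,S}_\tau\geq -M-KL\eta>-M-\epsilon$ on that set, contradicting the definition of $A$.

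The main obstacle is exactly the construction of a $\la$-CPS whose shadow price $\widetilde S_\tau$ sits arbitrarily close to a prescribed boundary of the bid--ask spread on a given $\mathcal F_\tau$-measurable set. This is precisely where the full hypothesis $\mathcal M(\la')\neq\emptyset$ for \emph{every} $\la'>0$, rather than only for $\la$, is indispensable: by Theorem~\ref{NA-L} it is equivalent to NA($\la'$) for every $\la'>0$. Letting $\la''\uparrow\la$ and using the inclusion $\mathcal M(\la'')\subset\mathcal M(\la)$ brings the admissible range of $\widetilde S_\tau$ down to the lower boundary $(1-\la)S_\tau$, while Corollary~\ref{Conv} on closedness under countable convex combinations, together with the martingale representation afforded by the Brownian filtration, provides enough flexibility to concentrate the shadow price near the prescribed boundary on the truncated set. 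Schachermayer's example referenced in the paper shows that in the absence of this full NA hypothesis the boundary approximation can fail, so this construction is the genuine content of the proposition and cannot be bypassed.
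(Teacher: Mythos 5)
The paper itself offers no proof of Proposition~\ref{schach}; it is quoted from unpublished work of W.~Schachermayer, so your attempt has to be judged on its own merits. Your first step is correct: Lemma~\ref{grr} plus optional sampling gives $\phi^0_\tau+\phi^1_\tau\widetilde S_\tau\geq -M$ a.s.\ for \emph{every} $\la$--CPS $(Q,\widetilde S)$, and you have correctly located the crux, namely producing $\la$--CPSes whose shadow price at $\tau$ is close to $(1-\la)S_\tau$ where $\phi^1_\tau>0$ and close to $S_\tau$ where $\phi^1_\tau<0$. But the mechanism you propose for this decisive step does not work, and for the long position it is in fact backwards. An element of $\mathcal{M}(\la'')$ with $\la''<\la$ satisfies $\widetilde S\geq (1-\la'')S$, so it is bounded \emph{away} from the lower boundary $(1-\la)S$ by $(\la-\la'')S$; letting $\la''\uparrow\la$ merely lets the constraint interval $[(1-\la'')S,S]$ grow back to $[(1-\la)S,S]$ and says nothing about where consistent price systems actually sit inside it (you are conflating ``the constraint permits values near the boundary'' with ``some CPS attains values near the boundary''). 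Likewise, convex combinations as in Corollary~\ref{Conv} only average shadow prices and cannot push them toward a boundary, and ``martingale representation'' does not manufacture a local martingale pinned near $(1-\la)S_\tau$ on a prescribed $\mathcal{F}_\tau$--set. So the key step of your argument is asserted rather than proved.

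The gap is fixable, and the fix is where the hypothesis $\mathcal{M}(\la'')\neq\emptyset$ for \emph{all} $\la''>0$ genuinely enters: take $\la''\in(0,\la)$, a $\la''$--CPS $(Q'',\widehat S)$, and \emph{rescale} it by the constant $c=\tfrac{1-\la}{1-\la''}<1$. Then $\widetilde S:=c\,\widehat S$ is still a $Q''$--local martingale and satisfies $(1-\la)S\leq\widetilde S\leq \tfrac{1-\la}{1-\la''}S\leq S$, so $(Q'',\widetilde S)$ is a $\la$--CPS hugging the lower boundary uniformly in $t$: $\widetilde S_t-(1-\la)S_t\leq \tfrac{\la''}{1-\la''}(1-\la)S_t$. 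Your step one applied to $(Q'',\widetilde S)$ then gives, on $\{\phi^1_\tau\geq0\}$, $V^{\la,S}_\tau(\phi)\geq -M-\tfrac{\la''}{1-\la''}(1-\la)\,\phi^1_\tau S_\tau$, while on $\{\phi^1_\tau<0\}$ the \emph{unrescaled} $(Q'',\widehat S)$ (which is also a $\la$--CPS) gives $V^{\la,S}_\tau(\phi)\geq -M-\la''\,|\phi^1_\tau|\,S_\tau$. Letting $\la''\downarrow0$ along a sequence and using that $|\phi^1_\tau|S_\tau<\infty$ a.s.\ yields $V^{\la,S}_\tau(\phi)\geq -M$ a.s.; with this device no truncation set, no $\ep$, and no contradiction argument are needed.
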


\subsection{Quantitative versions of the Halmos-Savage Theorem:}
The following two quantitative version of the Halmos-Savage Theorem were proved in \cite{K:Sch:1996}.

\begin{proposition}[Quantitative version of the Halmos-Savage
Theorem]\label{HS1}
For fixed $\ep>0$ and $\delta>0$ the following statement is true:
let $M$ be  a convex set of $\P$-absolutely continuous probability measures
such that for all sets $A\in\mathcal{F}$ with
$P(A)>\ep$ there is $Q\in M$ with $Q(A)>\delta$. Then
there is $Q_0\in M$ such that for all $A\in\mathcal{F}$ with $P(A)>4\ep$ we have that $Q_0(A)>\frac {\ep^2\delta} {2}$.
\end{proposition}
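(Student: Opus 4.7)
The plan is to construct $Q_0$ as a maximizer of a truncated, concave functional on $M$ and then deduce the bound by a direct volume estimate. Set the cutoff $c := \ep\delta/2$ and, writing $f_Q := dQ/dP$, consider
$$
\Phi(Q) := \E_P\bigl[f_Q \wedge c\bigr], \qquad Q \in M.
$$
Since $x \mapsto x \wedge c$ is concave and $M$ is convex, $\Phi$ is concave on $M$ and bounded above by $c$. A Koml\'os-type argument on the Ces\`aro means of a maximizing sequence (which remain in $M$ by convexity), together with dominated convergence for the truncated densities, produces a maximizer $Q_0$; and by perturbing $c$ by an arbitrarily small amount I may assume $P(\{f_{Q_0}=c\})=0$, since the law of $f_{Q_0}$ under $P$ has at most countably many atoms.

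The core claim is that $P(B) < 2\ep$, where $B := \{f_{Q_0} < c\}$. Granting this, for any $A \in \mathcal{F}$ with $P(A) > 4\ep$ one has $P(A \cap B^c) > 2\ep$, and therefore
$$
Q_0(A) \;\geq\; \int_{A \cap B^c} f_{Q_0}\, dP \;\geq\; c \cdot 2\ep \;=\; \ep^2\delta \;>\; \tfrac{\ep^2\delta}{2},
$$
which is the conclusion. (The statement is vacuous for $\ep \geq 1/4$, so it suffices to consider $\ep < 1/4$.)

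The claim is proved by contradiction. Assume $P(B) \geq 2\ep > \ep$; the hypothesis yields some $Q' \in M$ with $Q'(B) > \delta$. Set $Q_t := (1-t)Q_0 + tQ' \in M$ for $t \in [0,1]$. Pointwise concavity of $((1-t)a+tb) \wedge c$ in $t$ makes $t \mapsto \Phi(Q_t)$ concave on $[0,1]$, so its right derivative at $0$ coincides with $\sup_{t>0}(\Phi(Q_t)-\Phi(Q_0))/t$. A pointwise case split --- derivative $0$ on $\{f_{Q_0}>c\}$, derivative $f_{Q'}-f_{Q_0}$ on $B = \{f_{Q_0}<c\}$, null contribution from $\{f_{Q_0}=c\}$ by the choice of $c$ --- combined with the $L^1$-dominating function $f_{Q_0}+f_{Q'}$ for dominated convergence, yields
$$
D^+\Phi(0) \;=\; \E_P\bigl[(f_{Q'}-f_{Q_0})\ind_B\bigr] \;=\; Q'(B) - Q_0(B) \;>\; \delta - c\,P(B) \;\geq\; \delta(1-\ep/2) \;>\; 0,
$$
forcing $\Phi(Q_t) > \Phi(Q_0)$ for some $t > 0$ and contradicting the maximality of $Q_0$.

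The main obstacle I expect is precisely this perturbation step: the naive concavity bound $((1-t)a+tb) \wedge c \geq (1-t)(a\wedge c) + t(b\wedge c)$ only gives $\Phi(Q_t)-\Phi(Q_0) \geq t(\Phi(Q')-\Phi(Q_0))$, which is $\leq 0$ at a maximizer and hence useless. The improvement must be extracted from the strict slack $c - f_{Q_0} > 0$ on $B$ via the right-derivative analysis above, which in turn requires an actual maximizer (the Koml\'os step) rather than an $\eta$-approximate one, plus the mild perturbation of $c$ to rule out the boundary set $\{f_{Q_0}=c\}$. The calibration $c = \ep\delta/2$ is then exactly what makes the final volume estimate close with a strict inequality; everything else is routine.
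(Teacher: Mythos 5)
The paper itself does not prove Proposition~\ref{HS1} --- it is quoted verbatim from \cite{K:Sch:1996} --- so your argument has to stand entirely on its own. The variational core (the first‑derivative computation at a maximizer of $\Phi(Q)=\E_P[f_Q\wedge c]$, the domination by $f_{Q_0}+f_{Q'}$, and the final volume estimate) is sound, but there is a genuine gap where you assert that a maximizer $Q_0\in M$ exists. The set $M$ is only assumed convex; it carries no closedness in any topology. Koml\'os applied to a maximizing sequence produces Ces\`aro means $g_k=f_{R_k}$ with $R_k\in M$ converging a.s.\ to some $f\in L^1$, but this $f$ need not equal $dQ/dP$ for any $Q\in M$ --- it need not even be a probability density, since mass can escape in the a.s.\ limit and Fatou only gives $\E_P[f]\le 1$. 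A half‑open segment of densities on which $\Phi$ is strictly increasing already defeats attainment. Your own (correct) observation that an $\eta$-approximate maximizer does not feed the right‑derivative argument shows that this is not a cosmetic point: the measure $Q'$ with $Q'(B)>\delta$ may sit on a $P$-small part of $B$ where $f_{Q'}$ is huge, so for fixed $t>0$ the gain $\E_P[\min(tf_{Q'},\,c-f_{Q_0})\ind_B]$ admits no lower bound of order $t\delta$, and the contradiction cannot be closed at an approximate maximizer.

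The gap is repairable along your own lines, and the repair explains the otherwise unused factor $\tfrac12$ in the constant. Run the first variation at the Koml\'os limit $f$ itself: since $(1-t)R_k+tQ'\in M$ and $((1-t)g_k+tf_{Q'})\wedge c\to((1-t)f+tf_{Q'})\wedge c$ boundedly, one gets $\E_P[((1-t)f+tf_{Q'})\wedge c]\le s=\E_P[f\wedge c]$ for all $t$, and your derivative computation (with $B=\{f<c\}$) yields $P(f<c)<2\ep$, hence $\int_A (f\wedge c)\,dP\ge 2\ep c=\ep^2\delta$ for every $A$ with $P(A)>4\ep$. Since $g_k\wedge c\to f\wedge c$ in $L^1$ by bounded convergence, choose $k$ with $\|g_k\wedge c-f\wedge c\|_{L^1}<\ep^2\delta/2$; then $Q_0:=R_k\in M$ satisfies $Q_0(A)\ge\int_A(g_k\wedge c)\,dP>\ep^2\delta/2$ uniformly over such $A$. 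The slack you noticed in your final estimate is exactly what this approximation step consumes. A minor further point: your perturbation of $c$ to arrange $P(\{f_{Q_0}=c\})=0$ is circular, since $Q_0$ (or $f$) is constructed from $c$; it is also unnecessary, because the boundary set contributes at least $-c\,P(\{f=c\})\ge-c$ to the right derivative and $\delta-c\,P(B)-c\ge\delta(1-\ep)>0$ still closes the contradiction.
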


\begin{proposition}[Dual quantitative version of the
Halmos-Savage Theorem]\label{HS2}
For fixed $\ep>0$ and $\de>0$ the following statement is true: let
$M$ be a convex set of $\P$-absolutely continuous probability
measures  such that for all sets $A\in\mathcal{F}$ with
$P(A)<\de$ there is $Q\in M$ with $Q(A)<\ep$. Then
there is $Q_0\in M$ such that for all $A\in\mathcal{F}$ with $P(A)<2\ep\de$ we have that $Q_0(A)<8\ep$.
\end{proposition}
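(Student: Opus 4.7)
The proof proceeds by contradiction via an exhaustion argument in the style of Halmos--Savage. Suppose the conclusion fails; then for each $Q\in M$ there is a set $A\in\mathcal{F}$ with $P(A)<2\ep\de$ and $Q(A)\geq 8\ep$. The goal is to produce a single set $B\in\mathcal{F}$ with $P(B)<\de$ but $Q(B)\geq\ep$ for every $Q\in M$, contradicting the hypothesis.

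I build $B$ by iteration. Set $B_0=\emptyset$. Given $B_{n-1}$, if every $Q\in M$ satisfies $Q(B_{n-1})\geq\ep$, terminate and put $B:=B_{n-1}$. Otherwise choose $Q_n\in M$ with $Q_n(B_{n-1})<\ep$; by the failure hypothesis applied to $Q_n$, pick $A_n$ with $P(A_n)<2\ep\de$ and $Q_n(A_n)\geq 8\ep$, and set $B_n:=B_{n-1}\cup A_n$. Subadditivity gives $P(B_n)\leq 2n\ep\de$, so $P(B_n)<\de$ for every $n\leq \lfloor 1/(2\ep)\rfloor$. At each nonterminating step one obtains the ``progress'' inequality $Q_n(A_n\setminus B_{n-1})\geq 7\ep$, expressing that $Q_n$ places substantial new mass outside $B_{n-1}$.

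The key quantitative step, and the main obstacle, is to show that the iteration must actually terminate within these bounded-many steps with a $B$ that is uniformly heavy on $M$. Exploiting convexity of $M$, a finite convex combination $\bar Q=\sum_{i\leq n}\al_i Q_i\in M$ satisfies $\bar Q(B_n)\geq 8\ep$ (since $A_i\subseteq B_n$ for each $i$), and the factor $8\ep=4\cdot 2\ep$ provides just enough slack between the witness-mass $8\ep$ and the hypothesis-mass $\ep$ to run a pigeon-hole/averaging argument: decomposing $B_n$ according to which $Q_i$ is responsible for each piece and applying the hypothesis at each stage forces the uniform bound $Q(B_n)\geq\ep$ for every $Q\in M$. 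Tracking these constants and ensuring the convex-combination extraction yields an element of $M$ that is genuinely $\ep$-heavy on $B$ rather than merely on average is the delicate part; once it is done, $B_n$ contradicts the hypothesis since $P(B_n)<\de$, completing the proof.
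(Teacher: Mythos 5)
Your outline never actually proves the statement: everything is deferred to what you yourself call ``the delicate part,'' and the ingredients you do establish cannot be combined into the contradiction you announce. Concretely, (i) the progress inequality $Q_n(A_n\setminus B_{n-1})\ge 7\ep$ involves a \emph{different} measure $Q_n$ at each step, so no single quantity accumulates towards $1$ and nothing forces the iteration to terminate within the $\lfloor 1/(2\ep)\rfloor$ steps that your budget $P(B_n)<2n\ep\de<\de$ allows; (ii) the convexity observation $\bar Q(B_n)\ge 8\ep$ for $\bar Q=\sum_i\al_iQ_i$ contradicts nothing, because the hypothesis is \emph{existential} over $M$ --- it only asserts that \emph{some} $Q\in M$ has $Q(B_n)<\ep$, and indeed your own iteration continues precisely by invoking such a $Q$; exhibiting one heavy element of $M$ is irrelevant. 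Finally, the set $B$ you would need must satisfy $Q(B)\ge\ep$ for \emph{every} $Q$ in the (possibly infinite) set $M$, but your construction only ever controls the finitely many measures $Q_1,\dots,Q_n$ that happen to appear in it. So the core quantitative step --- the one that makes the result true --- is missing, not merely unpolished.

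For what it is worth, the paper does not prove this proposition either; it is quoted from Klein--Schachermayer \cite{K:Sch:1996}, so there is no internal proof to compare against. One way to actually close the argument (with better constants) is by minimax rather than exhaustion: let $G=\{g\in L^\infty(P):0\le g\le 1,\ E_P[g]\le 2\ep\de\}$, which is convex and $\sigma(L^\infty,L^1)$--compact, and consider the bilinear pairing $(Q,g)\mapsto E_Q[g]$ on $M\times G$. For any $g\in G$ (and $\ep<1/8$, the other case being trivial) the set $A=\{g>3\ep\}$ satisfies $P(A)\le \tfrac{2\de}{3}<\de$ by Markov's inequality, so the hypothesis yields $Q\in M$ with $Q(A)<\ep$ and hence $E_Q[g]\le Q(A)+3\ep\le 4\ep$; thus $\sup_{g\in G}\inf_{Q\in M}E_Q[g]\le 4\ep$, and Sion's minimax theorem gives $Q_0\in M$ with $\sup_{g\in G}E_{Q_0}[g]<8\ep$, in particular $Q_0(A)<8\ep$ whenever $P(A)<2\ep\de$ since then $\ind_A\in G$. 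If you want to stay elementary and produce $Q_0$ as an explicit finite convex combination, the witness measure chosen at stage $n+1$ has to be tied to the bad set of the running \emph{convex combination} $\frac1n\sum_{i\le n}Q_i$, not merely to the running union of bad sets; it is that coupling, absent from your sketch, which makes the counting terminate.
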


%
%
%
%

\bibliographystyle{plain}

\end{document}